\newtheorem{thm}{Theorem}[section]
\theoremstyle{plain}
\newtheorem{lem}[thm]{Lemma}
\newtheorem{prop}[thm]{Proposition}
\theoremstyle{definition}
\newtheorem{defi}[thm]{Definition}
\newtheorem{rem}[thm]{Remark}
\newtheorem{assum}[thm]{Assumption}
\newcommand{\supnormP}[1]{\ensuremath{\|{#1}\|_{L^{\infty}(\ktP)}}}
\newcommand{\ktE}{\ensuremath{\mathbb{E}}}
\newcommand{\ktP}{\ensuremath{\mathbb{P}}}
\newcommand{\ktR}{\ensuremath{\mathbb{R}}}
\newcommand{\ktQ}{\ensuremath{\mathbb{Q}}}
\newcommand{\ktF}{\ensuremath{\mathbb{F}}}
\newcommand{\htF}{\ensuremath{\mathcal{F}}}
\newcommand{\Var}{\ensuremath{\mathrm{Var}}}
\newcommand{\e}{\ensuremath{\mathrm{e}}}
\newcommand{\intl}[2]{\ensuremath{\int_{#1}^{#2}}}
\newcommand{\limit}[2]{\ensuremath{\lim_{{#1}\rightarrow{#2}}}}
\numberwithin{equation}{section}
\title
{Optimal liquidation trajectories for the Almgren-Chriss model with
L\'{e}vy processes}
\date{\today}
\author{Arne L\o kka\footnote{
    Department of Mathematics
    Columbia House
    London School of Economics
    Houghton Street, London WC2A 2AE
    United Kingdom
    (a.lokka\@@lse.ac.uk)}
    \and
    Junwei Xu\footnote{
    Department of Mathematics
    Columbia House
    London School of Economics
    Houghton Street, London WC2A 2AE
    United Kingdom
    (j.xu19\@@lse.ac.uk)}}
\begin{document}
\maketitle

\begin{abstract}
We consider an optimal liquidation problem with infinite horizon in the Almgren-Chriss 
framework, where the unaffected asset price follows a L\'{e}vy 
process. The temporary price impact is described by a general function which satisfies
some reasonable conditions. We consider an investor with constant absolute risk aversion,
who wants to maximise the expected utility of the cash received from the sale of his assets,
and show that this problem can be reduced to a deterministic optimisation problem which
we are able to solve explicitly. In order to compare our results with exponential
L\'{e}vy models, which provides a very good statistical fit with observed asset price data for
short time horizons, we derive the (linear) L\'{e}vy process approximation of such models.
In particular we derive expressions for the L\'{e}vy process approximation of the exponential 
Variance-Gamma L\'{e}vy process, and study properties of the corresponding optimal liquidation
strategy. We then provide a comparison of the liquidation trajectories for reasonable parameters
between the L\'{e}vy process model and the classical Almgren-Chriss model. 
In particular, we obtain an explicit expression for the connection between the temporary 
impact function for the L\'{e}vy model and the temporary impact function for the Brownian motion
model (the classical Almgren-Chriss model), for which the optimal liquidation trajectories for the two models coincide.

\bigskip

\textbf{Keywords:} Almgren-Chriss model, algorithmic trading, optimal liquidation, optimal execution, 
constant absolute risk aversion, market impact, L\'{e}vy processes, optimal control, 
Hamilton-Jacobi-Bellman equation, 

\end{abstract}


\section{Introduction}

The introduction of electronic trading platforms was followed by an increased interest in
how to split large orders into smaller orders in order to liquidate large asset positions.
An important question for large investors is how to sell a huge number of shares . 
Because of a lack of liquidity in the market it is often not practical to sell all the shares immediately
since this can result in too high an execution cost.
By splitting a large block of orders into 
smaller ones, the investor can often effectively reduce the cost substantially. 
The problem of finding the optimal way to do this has therefore been the subject of  
considerable interest. 

When the investor determines the speed at which to sell the shares, the key components 
are execution cost and market risk. A slow execution speed will result in a low execution cost, but high
market risk. On the other hand, a fast execution speed will result in a low market risk, but high
execution cost.
In most models dealing with optimal execution, Brownian motion is driving the market risk.
However, in reality observed stock price data demonstrate that Brownian motion is not a particularly good 
model for stock prices, especially for shorter time periods. For instance, 
sudden large price movements and the heavy-tailed distribution of log-returns 
can not be captured by Brownian motion. Also, observed logarithmic stock returns 
over short-time horizons are not normally distributed.
On the other hand, 
there has been a lot of theoretical and empirical studies that show that L\'{e}vy processes provide 
a good fit to market data. For detailed discussions, we refer to \citet{MS}, \citet{EK} and \citet{Barn}. 
Because of the reasons explained above, in particular that in practise the time it takes to liquidate often is 
very short and that L\'{e}vy processes provide good statistical models for stock prices over short time
periods, we will in this paper consider models based on L\'{e}vy processes. 

We consider a continuous-time optimal liquidation problem of a single stock in the Almgren-Chriss 
framework with infinite time horizon. The permanent impact function is supposed to be linear, and we 
describe the temporary market impact function in terms of general sufficient conditions 
ensuring that we are able to solve the problem explicitly. 
The unaffected share price is driven by a linear L\'{e}vy process. 
We assume that the large investor is not permitted to buy back shares 
during liquidation, but we can actually show by a dynamic programming argument
that any such strategy would be sub-optimal. The investor is supposed to have 
constant absolute risk aversion (CARA), and the aim is to maximise expected 
utility of the final cash position over a set of admissible liquidation strategies. Following an idea 
introduced in \citet{SST}, the optimisation problem is reduced to an optimisation problem over a set of 
deterministic strategies. 
Moreover, we show that for a general L\'{e}vy process, there is no immediate relationship between 
the optimal strategy for the mean-variance criteria and the optimal strategy for the expected exponential utility,
which holds for the Brownian motion case. We also show 
that when the L\'{e}vy process is a strict submartingale, our problem is ill-posed, and it is 
always optimal to hold on to the shares rather than sell. Then by solving the Hamilton-Jacobi-Bellman 
equation, the optimal liquidation strategy is derived in an explicit form. After that, we provide some 
conditions which determines whether the optimal strategy has a finite termination time. 

The standard way to analyse stock price data is to find the statistics of the log-returns.
This naturally leads to exponential L\'{e}vy models, and most distributions for the driving L\'{e}vy process
in relation to stock price data is of the exponential model type.
Given a specific exponential L\'{e}vy model, we therefore show how to linearise the model
in order to get a model of the form relevant to our paper.
We then provide some examples where we
assume the log-returns of the share price satisfy the variance gamma distribution and where they
satisfy the normal distribution. 
In the variance gamma case we find that the widely used power law market impact function can result in
optimal strategies which liquidate faster than what seems practical. 
We point out that cost from large trading speeds may be underestimated by 
power functions, and that a function with a bigger growth rate may better reflect the cost of execution. 

For an introduction to high-frequency trading and optimal execution, we refer the reader to
\citet{LL}, \citet{aCJP} and \citet{Gu}, but below we provide a brief review of the more
relevant works in connection to this paper.
\citet{BL} introduced a discrete time stock price model with illiquidity 
effects and related problems. Then \citet{AC,AC2} classified the effects in terms of 
permanent and temporary impacts of trading. In this kind of market impact framework, 
various liquidation models were developed. \citet{AC2} introduced a discrete 
time model with linear permanent and temporary impact functions, a deterministic optimal 
trading strategy was derived by mean-variance optimisation. \citet{Almg} generalised 
the model by considering non-linear impact functions. A single-asset continuous model with infinite time 
horizon was introduced in \citet{SS}, a multi-asset finite horizon model was considered by \citet{SST} 
and \citet{Scho} provides a multi-asset infinite horizon model. 
In these papers strategies were derived by maximising expected utilities instead of the
mean-variance criteria.  
\citet{SST} explained the relationship between mean-variance criteria and the expected exponential utility 
criteria in the Almgren-Chriss framework. They also proved that in a finite time horizon, 
when the stock price is driven by a L\'{e}vy process and an investor with exponential utility, the optimal 
strategy is deterministic. \citet{Gath} suggested that instead of dealing with permanent 
and temporary impacts, the market impact should decay over time. Moreover, \citet{OW} introduced 
a limit order book model and calculated the optimal execution strategy for such a model.
Afterwards, several authors considered variations of this limit order book model, 
such as \citet{AS}, \cite{AFS,ASS} and \citet{Lokk}. In the 
literature of continuous models of optimal execution, price processes are often linear and impact 
additive. However, by considering a new optimisation criterion, a model in the Almgren-Chriss 
framework based on geometric Brownian motion with linear market impact was given in \citet{GS}. 
Then, \citet{Schi} extended this model to general square integrable semimartingales. Also, some 
multiplicative impact models are introduced in \citet{FKTW} and \citet{GZ}; in 
particular, \citet{FKTW} demonstrated that the linear model gives an excellent approximation to 
models with prices modelled as a geometric Brownian motion and multiplicative impact in the 
Almgren-Chriss framework.

The structure of this paper is the following. In Section 2 we introduce the model and 
the optimal execution problem. We reduce the problem to a deterministic optimisation problem in Section 3, 
and solve it in Section 4. In Section 5 we show how to linearise exponential L\'{e}vy models,
and illustrate with examples in Section 6. Section 7 contain proofs not covered in the main sections.


\section{Problem formulation}

Let $(\Omega, \htF, \ktP)$ be a complete probability space, equipped with 
a filtration $\ktF=(\htF_t)_{t\geq 0}$ satisfying the usual conditions,
which supports a one dimensional, non-trivial, $\ktF$-adapted L\'{e}vy process $L$. 
We assume that the L\'{e}vy process $L$ possesses the following properties.

\begin{assum}\label{assum1}
$L_1$ has finite second moment. Moreover, the set
$\bigl\{\delta<0\mid \mathbb{E}\bigl[\mathrm{e}^{\delta L_1}\bigr]<\infty\bigr\}$
is non-empty.  
\end{assum}

For future reference, we observe that this assumption ensures that $L_t$ has finite first 
and second moments, for all $t\geq 0$. Hence, $L$ admits the decomposition 
\begin{gather*} 
L_t=\mu t+\sigma W_t+\int_{\ktR}x\,\big(N(t,dx)-t\nu(dx)\big), 
\end{gather*}
where $\mu\in\mathbb{R}$ and $\sigma\geq 0$ are two constants, $W$ is a standard Brownian 
motion, $N$ is a Poisson random measure which is independent of $W$ with compensator $t\nu(dx)$, and $\nu$ is the L\'{e}vy measure associated with $L$ \citep[see e.g.][]{Kypr}.  Set
\begin{gather}\label{bardelta}
\bar{\delta}=\inf\bigl\{\delta<0\mid \mathbb{E}\bigl[\mathrm{e}^{\delta L_1}\bigr]<\infty\bigr\}<0 .
\end{gather}
Then Assumption \ref{assum1} also ensures that the cumulant generating function of $L_1$ is finite on the 
interval $(\bar{\delta},0]$. 

We consider an investor who aims to sell a large amount of shares of a single stock without any time-constraints.
For $t\geq 0$, we denote by $Y_t$ the investor's position in the stock at time $t$, and 
let $y\geq 0$ denote the investor's initial stock position. We consider the following sets
of admissible liquidation strategies.
 
\begin{defi}\label{Def_Y}
Given an initial share position $y\geq0$, 
the set of admissible strategies, denoted by $\mathcal{A}(y)$, consists of all $\ktF$-adapted, 
absolutely continuous, non-increasing processes $Y$ satisfying 
\begin{gather}
\int_0^\infty \Arrowvert Y_t\Arrowvert_{L^\infty(\mathbb{P})}\, dt<\infty\qquad\text{if }\mu\neq 0, 
\label{Yintegrable}
\end{gather}
and
\begin{gather}
\int_0^\infty \Arrowvert Y_t\Arrowvert_{L^\infty(\mathbb{P})}^2\, dt<\infty \qquad\text{if }\mu=0 .
\label{Y^2integrable}
\end{gather}
Let $\mathcal{A}_D(y)$ be the set of all deterministic strategies in $\mathcal{A}(y)$.  
\end{defi}

The reason for operating with different sets of admissibility depending on the drift parameter $\mu$
is related to the asymptotic properties of the cumulant generating function of $L_1$ around 0. 
If $\mu$ is $0$ then the cumulant generating function is of order two around zero, while it is of 
order one if $\mu$ is different from zero (the importance of the cumulant generating function of 
$L_1$ will be explained later). The integrability conditions in (\ref{Yintegrable}) and 
(\ref{Y^2integrable}) make sure that the investor's finial cash position is well-defined (see Proposition \ref{propconverge}), 
and are also necessary in order for the optimisation problem to be well defined (see Remark \ref{Rem_intcdt}). 

Let $Y\in \mathcal{A}(y)$.
Then there exists an $\ktF$-adapted, positive-valued process $\xi$ such that $Y$ admits the representation
\[Y_t=y-\int_0^t\xi_s\,ds,\]
i.e. $-\xi_t$ is the time derivative of $Y$ at time $t$. In the literature of optimal liquidation, the function
$t\mapsto Y_t$ is referred to as the liquidation trajectory and the associated process $\xi$ as the liquidation speed \citep[see][etc]{AC2,Almg}. 


It is common in the optimal liquidation literature to refer to 
the price process observed in the market if the investor does not trade as
the unaffected stock
price process.
Throughout this chapter we assume that the unaffected stock price process is modelled by  
\[s+L_t,\quad t\geq 0,\] 
where $s>0$ is some constant which denotes the initial stock price. In reality, liquidation is often completed in a very short time. 
It is well known that L\'evy processes provide a good fit of the observed stock returns over short time horizons. 
Therefore the model should provide a good balance between the the cost of liquidating the position and the corresponding
market risk.
Following \citet{AC,AC2} and \cite{Almg}, we split market impact
into two components: a permanent impact and a temporary impact. 
We therefore assume that the stock price at time $t\geq 0$ is given by 
\begin{gather}
S_t=s+L_t+\alpha(Y_t-Y_0)-F(\xi_t), \label{linearprice}
\end{gather}
where $\alpha\geq 0$ is a constant describing the permanent impact and $F:[0,\infty)\rightarrow[0,\infty)$ 
is a function describing the temporary impact.
We assume that $F$ satisfies the following assumptions.
\begin{assum} \label{assumF}
The temporary impact function $F:[0,\infty)\rightarrow[0,\infty)$ satisfies that 
\begin{itemize}
\item[(i)]$F\in C([0,\infty))\cap C^1((0,\infty))$; 
\item[(ii)]$F(0)=0$; 
\item[(iii)]the function $x\mapsto xF(x)$ is strictly convex on $[0,\infty)$; 
\item[(iv)]the function $x\mapsto x^2F'(x)$ is strictly increasing,
and it tends to infinity as $x\rightarrow\infty$.
\end{itemize} 
\end{assum}
In the above assumption, condition (iii) serves to ensure convexity of the objective function in the optimisation problem we are going to solve (see (\ref{objfunc})) and hence uniqueness of the solution (see Theorem \ref{thmvarify}); condition (iv) ensures that the value function in our optimisation problem is solved in an explicit form (see Proposition \ref{propsolntoHJB}) and the optimal liquidation speed process can be expressed in a feedback form (see Theorem \ref{thmvarify}). 
Assumption \ref{assumF} 
is satisfied by a large class of functions, for example, $F(x)=\beta x^\gamma$ with $\beta, \gamma>0$ or
\begin{numcases}
{F(x)=}
\beta_1x^{0.6} \qquad\qquad\qquad\qquad\qquad\qquad x\in [0,\bar{x}],\notag\\
\beta_2e^{\gamma(x-\bar{x}+\hat{x})}-\beta_2e^{\gamma\hat{x}}+\beta_1\bar{x}^{0.6} \qquad x\in (\bar{x},\infty), 
\label{Fcombination}
\end{numcases}
where $\beta_1$, $\beta_2$, $\gamma$ and $\bar{x}$ are strictly positive constants and $\hat{x}$ is given by
\begin{gather*}
\hat{x}=\frac{\ln\Bigl(\frac{3\beta_1}{5\beta_2\gamma}\Bigr)-\frac{2}{5}\ln\bar{x}}{\gamma}.
\end{gather*}
Under this assumption, we derive the following technical properties of $F$ for future references. 

\begin{lem}\label{prop3}
$F$ is strictly increasing and  
$\lim_{x\rightarrow 0}xF'(x)=0$. Hence $\lim_{x\rightarrow 0}x^2F'(x)=0$. 
\end{lem}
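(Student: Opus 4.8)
The plan is to work with the auxiliary function $g(x):=xF(x)$, which by Assumption \ref{assumF} is continuous on $[0,\infty)$ (conditions (i)--(ii)), continuously differentiable on $(0,\infty)$, strictly convex (condition (iii)), satisfies $g(0)=0$, and is non-negative because $F\geq 0$. Everything will be read off from the elementary theory of convex functions, so that neither condition (iv) nor any finer structure of $F$ is needed for this lemma.

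For the strict monotonicity of $F$, I would exploit the fact that for a strictly convex $g$ with $g(0)=0$ the difference quotient through the origin $x\mapsto g(x)/x$ is strictly increasing. Concretely, for $0<x_1<x_2$ write $x_1=\theta x_2$ with $\theta=x_1/x_2\in(0,1)$ and apply strict convexity in the form $g(x_1)=g\bigl(\theta x_2+(1-\theta)0\bigr)<\theta g(x_2)$, which rearranges to $g(x_1)/x_1<g(x_2)/x_2$, i.e. $F(x_1)<F(x_2)$. Thus $F$ is strictly increasing on $(0,\infty)$; since $F\geq 0$, strict monotonicity forces $F(x)>0=F(0)$ for every $x>0$, so $F$ is in fact strictly increasing on all of $[0,\infty)$.

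For the limit I would use the identity $g'(x)=F(x)+xF'(x)$ on $(0,\infty)$ together with the fact that, $g$ being convex, $g'$ is non-decreasing; hence $m:=\lim_{x\to 0^+}g'(x)=\inf_{x>0}g'(x)$ exists in $[-\infty,\infty)$. The task reduces to showing $m=0$, after which $xF'(x)=g'(x)-F(x)\to 0-0=0$, using continuity of $F$ with $F(0)=0$. I would pin down $m$ by sandwiching it between two secant-slope bounds coming from convexity. For the lower bound, the inequality $\frac{g(x)-g(\delta)}{x-\delta}\le g'(x)$ with $\delta\downarrow 0$ gives $F(x)\le g'(x)$, and letting $x\downarrow 0$ yields $m\geq 0$. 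For the upper bound, the inequality $g'(x)\le \frac{g(x_0)-g(x)}{x_0-x}$ for $0<x<x_0$, with $x\downarrow 0$, gives $m\le F(x_0)$, and then $x_0\downarrow 0$ gives $m\le 0$; hence $m=0$. Finally $x^2F'(x)=x\cdot\bigl(xF'(x)\bigr)\to 0\cdot 0=0$, which is the asserted ``hence''.

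The main obstacle is the limit computation rather than the monotonicity: one must be careful that the interior derivative $g'$ converges to the correct value at the boundary point $0$, where $g$ is only assumed continuous. The convexity of $g$ is precisely what lets me transfer information about the difference quotients $g(x)/x$ (which are controlled by the continuity of $F$ at $0$) into a two-sided bound on $\lim_{x\to 0^+}g'(x)$ through the monotone secant-slope inequalities; confirming that $m$ is finite and equals $0$ from both sides is the only delicate point.
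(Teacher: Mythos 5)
Your proof is correct, and while the monotonicity half coincides with the paper's (both exploit the strict convexity of $g(x)=xF(x)$ together with $g(0)=0$ to compare values of $F$ at $\lambda x$ and $x$), your treatment of the limit takes a genuinely different route. The paper also begins from the convexity of $g$ to conclude that $\lim_{x\rightarrow 0}xF'(x)$ exists and is non-negative, but it then excludes a strictly positive limit by contradiction: if $xF'(x)>c>0$ near the origin, then $F'(u)>c/u$ there, and integrating yields $F(\bar{x})=\lim_{x\rightarrow 0}\int_x^{\bar{x}}F'(u)\,du=\infty$, contradicting the continuity (hence finiteness) of $F$. You never integrate $F'$: instead you sandwich $g'(x)$ between the secant slopes $F(x)=g(x)/x$ from below and $\bigl(g(x_0)-g(x)\bigr)/(x_0-x)$ from above, and let both bounds collapse to $0$ using only the continuity of $F$ at $0$ with $F(0)=0$; this pins down $m=\lim_{x\rightarrow 0^+}g'(x)=0$ exactly, after which $xF'(x)=g'(x)-F(x)\rightarrow 0$ and the ``hence'' part is immediate. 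Your argument is purely convex-analytic and in one respect slightly stronger---it identifies the boundary derivative $g'(0^+)=0$ rather than merely ruling out a positive limit---and it also sidesteps the paper's unneeded pointwise assertion that $F'(x)>0$ (strict monotonicity only guarantees $F'\geq 0$, which is all either argument requires). The paper's divergent-harmonic-integral device is shorter on the page and reappears elsewhere in the paper (similar integral-divergence arguments drive the proofs of Lemma \ref{lemtZ} and Proposition \ref{proptau}), so it keeps the toolkit uniform; both approaches establish the lemma in full.
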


For $t\geq 0$, let $C^Y_t$ denote the cash position of the investor at time $t$ associated with an admissible strategy $Y$. Denote by $c\in\ktR$ the investor's initial cash position. Then a direct calculation verifies that his cash position at some finite time $T$ is given by
\begin{align}\label{cashatT}
C^Y_T&=c-\int_0^T S_t\, dY_t  \notag \\
   &=c-\bigl(s-\alpha y\bigr)\bigl(Y_T-y\bigr)+\frac{\alpha}{2}\bigl(y^2-Y_T^2\bigr)
      -L_TY_T+\int_0^T Y_{t-}\, dL_t-\int_0^T \xi_tF(\xi_t)\,dt.  
\end{align}
The next result states that the investor's cash position at the end of time is well-defined.

\begin{prop}\label{propconverge}
For any  $Y\in\mathcal{A}(y)$, we have 
\begin{itemize}
\item[(i)]$L_TY_T\rightarrow 0$ in $L^2(\ktP)$, as $T\rightarrow \infty$;
\item[(ii)]$\int_0^\infty Y_{t-}\, dL_t$ is well-defined in $L^1(\ktP)$. 
\end{itemize}
Therefore, 
\begin{gather}
C^Y_\infty=c+sy-\frac{1}{2}\alpha y^2
+\int_0^\infty Y_{t-}\, dL_t-\int_0^\infty \xi_tF(\xi_t)\, dt ,
\qquad a.s. , \label{cashatinf}
\end{gather}
for any $Y\in\mathcal{A}(y)$. 
\end{prop}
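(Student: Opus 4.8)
The plan is to obtain (\ref{cashatinf}) by letting $T\to\infty$ in the finite-horizon identity (\ref{cashatT}) and controlling each term separately, with parts (i) and (ii) isolating the two stochastic terms that require care. Throughout I would use that an admissible $Y$ is non-increasing with $0\le Y_t\le y$ (so $t\mapsto\|Y_t\|_{L^\infty(\ktP)}$ is non-increasing, which is what powers the decay argument), together with the second-moment formula $\ktE[L_T^2]=\mu^2T^2+T\Var(L_1)$; the latter follows from the decomposition of $L$ and Assumption \ref{assum1}, since the Brownian and compensated-jump parts are mean-zero and independent, giving $\ktE[L_T]=\mu T$ and $\Var(L_T)=T\Var(L_1)$.

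For part (i), since $|Y_T|\le\|Y_T\|_{L^\infty(\ktP)}$ almost surely, I would bound
\[
\ktE\bigl[(L_TY_T)^2\bigr]\le \|Y_T\|_{L^\infty(\ktP)}^2\,\ktE[L_T^2]=\|Y_T\|_{L^\infty(\ktP)}^2\bigl(\mu^2T^2+T\Var(L_1)\bigr).
\]
The crux is the elementary fact that a non-increasing, non-negative, integrable function $g$ on $[0,\infty)$ satisfies $Tg(T)\to0$, which follows from $\tfrac{T}{2}g(T)\le\int_{T/2}^{T}g(t)\,dt\to0$. Applying this with $g=\|Y_\cdot\|_{L^\infty(\ktP)}$ when $\mu\neq0$, whose integrability is supplied by (\ref{Yintegrable}), gives $T\|Y_T\|_{L^\infty(\ktP)}\to0$, whence both $\mu^2T^2\|Y_T\|_{L^\infty(\ktP)}^2=\mu^2\bigl(T\|Y_T\|_{L^\infty(\ktP)}\bigr)^2$ and $T\Var(L_1)\|Y_T\|_{L^\infty(\ktP)}^2\le y\Var(L_1)\,T\|Y_T\|_{L^\infty(\ktP)}$ tend to $0$. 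When $\mu=0$ the term linear in $T$ is the only one present, and applying the fact with $g=\|Y_\cdot\|_{L^\infty(\ktP)}^2$, whose integrability is now guaranteed by (\ref{Y^2integrable}), yields $T\|Y_T\|_{L^\infty(\ktP)}^2\to0$ and again the bound vanishes. This case split is exactly the role played by the two admissibility conditions in Definition \ref{Def_Y}.

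For part (ii), I would write $L_t=\mu t+M_t$ with $M_t=\sigma W_t+\int_\ktR x\,(N(t,dx)-t\nu(dx))$ a square-integrable martingale satisfying $\langle M\rangle_t=\Var(L_1)\,t$, and split $\int_0^T Y_{t-}\,dL_t=\mu\int_0^T Y_{t-}\,dt+\int_0^T Y_{t-}\,dM_t$. The drift part converges in $L^1(\ktP)$ by Tonelli, since $\ktE\int_0^\infty|Y_{t-}|\,dt\le\int_0^\infty\|Y_t\|_{L^\infty(\ktP)}\,dt<\infty$ when $\mu\neq0$ (and it is absent when $\mu=0$). For the martingale part, the It\^o isometry gives $\ktE[(\int_0^T Y_{t-}\,dM_t)^2]=\Var(L_1)\,\ktE\int_0^T Y_{t-}^2\,dt\le\Var(L_1)\int_0^T\|Y_t\|_{L^\infty(\ktP)}^2\,dt$, which is bounded uniformly in $T$: directly via (\ref{Y^2integrable}) when $\mu=0$, and via $\|Y_t\|_{L^\infty(\ktP)}^2\le y\|Y_t\|_{L^\infty(\ktP)}$ together with (\ref{Yintegrable}) when $\mu\neq0$. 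Hence the martingale is $L^2(\ktP)$-bounded and converges in $L^2(\ktP)$ by the martingale convergence theorem, so $\int_0^\infty Y_{t-}\,dL_t$ is well-defined in $L^1(\ktP)$.

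Finally, to reach (\ref{cashatinf}) I would pass to the limit in (\ref{cashatT}). Since $\|Y_T\|_{L^\infty(\ktP)}\to0$ (a by-product of the decay established in (i)), the deterministic terms converge to $(s-\alpha y)y+\tfrac{\alpha}{2}y^2=sy-\tfrac12\alpha y^2$; the term $L_TY_T\to0$ by (i); the stochastic integral converges by (ii); and $\int_0^T\xi_tF(\xi_t)\,dt$ increases to $\int_0^\infty\xi_tF(\xi_t)\,dt\in[0,\infty]$ by monotone convergence. Extracting a subsequence along which the $L^1$ and $L^2$ convergences also hold almost surely then yields the stated identity a.s. I expect the main obstacle to be part (i): matching the decay rate of $\|Y_T\|_{L^\infty(\ktP)}$ to the $\mu^2T^2+T\Var(L_1)$ growth of $\ktE[L_T^2]$, which is precisely what dictates the two separate integrability requirements in the definition of admissibility.
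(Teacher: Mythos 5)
Your proposal is correct, and its overall skeleton coincides with the paper's: the same bound $\ktE[(L_TY_T)^2]\leq\|Y_T\|_{L^\infty(\ktP)}^2\,\ktE[L_T^2]$ with $\ktE[L_T^2]=\mu^2T^2+T\Var(L_1)$ for part (i) (the paper extracts this from twice-differentiability of the characteristic function, you from the L\'evy decomposition directly --- immaterial), and the same drift-plus-martingale splitting with the It\^o isometry for part (ii). Two local steps differ in a way worth noting. First, for the key decay fact the paper proves a separate Lemma (its Lemma \ref{lemtZ}) via a $\liminf$ argument combined with an It\^o/integration-by-parts identity $tZ_t=\int_0^t u\,dZ_u+\int_0^t Z_u\,du$ and a sup/inf sandwich; your one-line tail bound $\tfrac{T}{2}g(T)\leq\int_{T/2}^{T}g(t)\,dt\rightarrow 0$ for non-increasing integrable $g$ proves the same statement far more economically, and your use of $\|Y_t\|_{L^\infty(\ktP)}\leq y$ to reduce $T\|Y_T\|^2_{L^\infty(\ktP)}$ to $T\|Y_T\|_{L^\infty(\ktP)}$ lets you invoke the fact once rather than twice in the $\mu\neq 0$ case. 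Second, in part (ii) the paper's displayed computation only establishes a bound on $\ktE\bigl[\,\bigl|\int_0^T Y_{t-}\,dL_t\bigr|\,\bigr]$ uniform in $T$ and leaves the actual convergence implicit (one would apply the same estimate to increments $\int_S^T$ to get Cauchyness in $L^1(\ktP)$); your argument instead gets genuine convergence --- Tonelli for the drift part and the $L^2$-bounded martingale convergence theorem for the martingale part --- which is a cleaner and slightly stronger route to ``well-defined in $L^1(\ktP)$''. Your closing passage to (\ref{cashatinf}) via a.s. convergence along a subsequence is at least as careful as the paper, which asserts the almost-sure identity without further comment.
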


From the expression of $C^Y_\infty$, we can make a few 
observations. The term $c+sy$ can be viewed as the initial mark-to-market wealth of the investor. His total loss due to the permanent impact of trading is given by $\frac{1}{2}\alpha y^2$, which is deterministic and only depends on the initial liquidation size. In particular, it does not depend on
the choice of liquidation strategy. The term $\int_0^\infty \xi_tF(\xi_t)\, dt $ represents the total cost 
due to the temporary impact, and it does depend on the liquidation strategy.  The term
 $\int_0^\infty Y_{t-}\, dL_t$ represents the gain or loss due to market volatility. A relatively slow 
 liquidation speed reduces the temporary impact, but provides a substantial market volatility risk. The optimal
 liquidation strategy is therefore a compromise between the loss due to the temporary impact and the market
 volatility risk. We assume that the investor has a constant absolutely risk aversion (CARA), thus his utility function $U$ satisfies $U(x)=-\exp(-Ax)$, 
 for some constant $A>0$. The investor aims to
 maximise the expected utility of his cash position at the end of time, i.e. he wants to solve
\begin{gather}\label{originalprob}
\sup_{Y\in\mathcal{A}(y)}\mathbb{E}\bigl[U\bigl(C_\infty^Y\bigr)\bigr] .
\end{gather}
In view of (\ref{cashatinf}),
this problem takes the form of 
\begin{gather}
\inf_{Y\in\mathcal{A}(y)}\mathrm{e}^{-A\widetilde{C}}\,
\ktE\biggl[\exp\biggl(-\int_0^\infty AY_{t-}\, dL_t
+A\int_0^\infty \xi_tF(\xi_t)\, dt\biggr)\biggr],  \label{key}
\end{gather}
where
\[\widetilde{C}=c+sy-\frac{1}{2}\alpha y^2.\] 
To solve the above problem, it is sufficient to look at 
\begin{gather}
\inf_{Y\in\mathcal{A}(y)}\ktE\biggl[\exp\biggl(-\int_0^\infty AY_{t-}\, dL_t
+A\int_0^\infty \xi_tF(\xi_t)\, dt\biggr)\biggr].  \label{optprob1}
\end{gather}

\begin{rem}\label{Rem_intcdt}
Suppose that we do not impose integrability conditions (\ref{Yintegrable}) and (\ref{Y^2integrable}) on 
an admissible strategy. The cash position at time infinity may then not be well-defined, but one may 
consider to solve the problem 
\begin{gather*}
\sup_{Y\in\mathcal{A}(y)}\ktE\Bigl[-\exp\Bigl(-A\limsup_{T\rightarrow\infty}C^Y_T\Bigr)\Bigr]. 
\end{gather*}
However, without (\ref{Yintegrable}) and (\ref{Y^2integrable}), our model admits an arbitrage in some week sense. To see this, take
for instance the L\'evy process $L$ to be a standard Brownian motion and consider some stock price $p>s$. Write 
$\tau_p=\inf\{t\geq 0\,|\,L_t\geq p\}$ which is finite a.s. (see \cite{RW}, Lemma 3.6). Suppose $Y$ is an absolutely continuous, non-increasing 
strategy which consists of waiting until time $\tau_p$ and then decreases to 0 in a deterministic way during a finite time, 
i.e. $(Y_{\tau_p+t})_{t\geq 0}$ is a deterministic process starting from $y$. Such strategy is admissible. Let $\xi$ be the associated speed process. We calculate that 
\begin{align*}
&\sup_{Y\in\mathcal{A}(y)}\ktE\Bigl[-\exp\Bigl(-A\limsup_{T\rightarrow\infty}C^{Y}_T\Bigr)\Bigr]\cr
\geq&\,\ktE\Bigl[-\exp\Bigl(-A\limsup_{T\rightarrow\infty}C^{Y}_T\Bigr)\Bigr]\cr
\geq&\,\ktE\Bigl[-\exp\Bigl(-AC^{Y}_{T+\tau_p}\Bigr)\Bigr]\cr
=&\,-\exp\biggl(-A\widetilde{C}+A\int_0^{T} \xi_{t+\tau_p}F\bigl(\xi_{t+\tau_p}\bigr)\,dt\biggr)
\ktE\biggl[\exp\biggl(-\int_0^{T+\tau_p} AY_{t}\,dW_t\biggr)\biggr]\cr
=&\,-\exp\biggl(-A\widetilde{C}+A\int_0^{T} \xi^{p}_{t+\tau_p}F\bigl(\xi^{p}_{t+\tau_p}\bigr)\,dt\biggr)
\ktE\biggl[\exp\biggl(-AyW_{\tau_p}-\int_{\tau_p}^{T+\tau_p} AY_{t}^p\,dW_t\biggr)\biggr]\cr
=&\,-\exp\biggl(-Ayp-A\widetilde{C}+A\int_0^{T} \xi^{p}_{t+\tau_p}F\bigl(\xi_{t+\tau_p}\bigr)\,dt\biggr)
\ktE\biggl[\exp\biggl(\int_{\tau_p}^{T+\tau_p} \frac{1}{2} A^2\bigl(Y_{t}\bigr)^2\,dt\biggr)\biggr]\cr
=&\,-\exp\biggl(-Ayp-A\widetilde{C}+A\int_0^{T} \xi_{t+\tau_p}F\bigl(\xi_{t+\tau_p}\bigr)\,dt
+\int_{0}^{T} \frac{1}{2} A^2\bigl(Y_{t+\tau_p}\bigr)^2\,dt\biggr), 
\end{align*}
where $\widetilde{C}=c+sy-\frac{1}{2}\alpha y^2$, and notice that the two integrals in the above line are two constants. Taking $p$ to $+\infty$ gives 
\begin{gather*}
\lim_{p\rightarrow\infty}\ktE\Bigl[-\exp\Bigl(-AC^{Y}_{T+\tau_p}\Bigr)\Bigr]=0 ,
\end{gather*}
and hence that the associated value function is degenerate.
Moreover, Jensen's inequality results in 
\begin{gather*}
\lim_{p\rightarrow\infty}-\exp\Bigl(-A\ktE\bigl[C^{Y}_{T+\tau_p}\bigr]\Bigr)
\geq\lim_{p\rightarrow\infty}\ktE\Bigl[-\exp\Bigl(-AC^{Y}_{T+\tau_p}\Bigr)\Bigr]=0, 
\end{gather*}
which implies that 
\begin{gather*}
\lim_{p\rightarrow\infty}\ktE\bigl[C^{Y}_{T+\tau_p}\bigr]=\infty .
\end{gather*}
However, $Y$ clearly violates (\ref{Yintegrable}) and (\ref{Y^2integrable}). This shows that 
(\ref{Yintegrable}) and (\ref{Y^2integrable}) are not only convenient from a mathematical point of view, but also necessary in order for
the problem to be well formulated.
\end{rem}

\section{Problem simplification}

Throughout this section, we reduce problem (\ref{optprob1}) to a deterministic optimisation problem. 
Set $\bar{\delta}_A=-\bar{\delta}/A$, where $\bar{\delta}$ is the negative number appearing
in (\ref{bardelta}) and $A$ is the risk aversion parameter appearing in the utility function $U$. We make the following
futher assumptions. 

\begin{assum}\label{assum2}
The initial stock position $y$ is strictly less than $\bar{\delta}_A$. 
\end{assum}

\begin{assum}\label{mu-assump}
The drift $\mu$ of the L\'{e}vy process $L$ satisfies $\mu\leq 0$. 
\end{assum}

Assumption \ref{assum2} puts restrictions on the size of the investor's initial position in order to ensure
that the objective function is finite and well defined. If we do not impose this restriction, then the market risk associated
with the investors position is so large that the investor would want to reduce the position immediately at any finite costs, which is
not possible. 
Assumption \ref{mu-assump} excludes a degenerate case of our reduced problem (see the discussion after equation (\ref{J(Y)finite})). 

Define a function $\kappa_A:[0,\bar{\delta}_A)\rightarrow\ktR$ by $\kappa_A(x)=\kappa(-Ax)$,
where $\kappa$ is the cumulant generating function of $L_1$, that is
\begin{gather*}
\kappa(x)=\ln\bigl(\mathbb{E}\bigl[\mathrm{e}^{xL_1}\bigr]\bigr) ,\qquad x\in\mathbb{R} .
\end{gather*} 
This function will play an important role in the sequel. 

\begin{lem}
The function $\kappa_A$ possesses the following properties 
\begin{itemize}
\item[(i)]$\kappa_A(0)=0$;
\item[(ii)]$\kappa_A$ is strictly convex; 
\item[(iii)]if $\mu=0$, then $\lim_{x\rightarrow 0}\frac{\kappa_A(x)}{x^2}=K$, for some constant $K>0$; 
\item[(iv)]if $\mu\neq 0$, then $\lim_{x\rightarrow 0}\frac{\kappa_A(x)}{x}=-A\mu$. 
             \label{propkappaprop}
\end{itemize}
\end{lem}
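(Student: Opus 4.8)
The plan is to derive each property from the defining relation $\kappa_A(x)=\kappa(-Ax)$, where $\kappa$ is the cumulant generating function of $L_1$, together with Assumption \ref{assum1} which guarantees that $\kappa$ is finite (and hence smooth, by standard properties of cumulant generating functions) on $(\bar{\delta},0]$. Since $x\in[0,\bar{\delta}_A)$ corresponds to $-Ax\in(\bar{\delta},0]$ via $\bar{\delta}_A=-\bar{\delta}/A$, the function $\kappa_A$ is well-defined and inherits smoothness on its domain.

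For property (i), I would simply note that $\kappa_A(0)=\kappa(0)=\ln(\mathbb{E}[\mathrm{e}^0])=\ln 1=0$. For property (ii), the standard fact that any cumulant generating function is convex on the interior of its domain (this follows from H\"{o}lder's inequality, or equivalently from the fact that $\kappa''(x)=\Var_{\ktP_x}$ under the Esscher-tilted measure is nonnegative) gives convexity of $\kappa$; strict convexity holds because $L$ is assumed non-trivial, so $L_1$ is not a.s. constant and the variance under the tilted measure is strictly positive. Since $\kappa_A(x)=\kappa(-Ax)$ is $\kappa$ composed with an affine map, $\kappa_A''(x)=A^2\kappa''(-Ax)>0$, giving strict convexity of $\kappa_A$.

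For properties (iii) and (iv), I would use the Taylor expansion of $\kappa$ around $0$. Because $L_1$ has finite second moment (Assumption \ref{assum1}), $\kappa$ is twice differentiable at $0$ with $\kappa'(0)=\ktE[L_1]$ and $\kappa''(0)=\Var(L_1)$. From the L\'{e}vy decomposition, $\ktE[L_1]=\mu$ and $\Var(L_1)=\sigma^2+\int_{\ktR}x^2\,\nu(dx)$. Thus $\kappa(u)=\mu u+\tfrac{1}{2}(\sigma^2+\int x^2\nu(dx))u^2+o(u^2)$ as $u\to 0$. Substituting $u=-Ax$ yields $\kappa_A(x)=-A\mu x+\tfrac{1}{2}A^2(\sigma^2+\int x^2\nu(dx))x^2+o(x^2)$. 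If $\mu\neq 0$, then the linear term dominates and $\kappa_A(x)/x\to -A\mu$, giving (iv). If $\mu=0$, the linear term vanishes, and $\kappa_A(x)/x^2\to \tfrac{1}{2}A^2(\sigma^2+\int x^2\nu(dx))=:K$, which is strictly positive because $L$ is non-trivial (so either $\sigma>0$ or $\nu\neq 0$), giving (iii).

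The main obstacle is making the smoothness and differentiability of $\kappa$ at the boundary point $0$ rigorous. The issue is that $0$ is the right endpoint of the interval $(\bar{\delta},0]$ on which $\kappa$ is known to be finite, so one must justify differentiating under the expectation to obtain $\kappa'(0)$ and $\kappa''(0)$ from one-sided limits. The clean way is to invoke that finiteness of the moment generating function on a neighbourhood of $0$ from the left, combined with the finite second moment hypothesis, forces $u\mapsto\ktE[\mathrm{e}^{uL_1}]$ to be twice continuously differentiable up to $0$ from the left, with derivatives obtained by differentiating under the integral sign (dominated convergence applied to the difference quotients, using an integrable majorant built from $\mathrm{e}^{\delta L_1}$ for some $\delta\in(\bar{\delta},0)$ on the negative part and the finite-second-moment control on the positive part). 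Once this one-sided $C^2$ regularity at $0$ is secured, the Taylor expansions above and all four claims follow directly.
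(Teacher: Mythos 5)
Your proof is correct, but it runs along a different track from the paper's. The paper works entirely at the level of the L\'{e}vy--Khintchine representation: it writes $\kappa_A(u)=-A\mu u+\frac{1}{2}A^2u^2\sigma^2+\int_{\ktR}(\e^{-Aux}-1+Aux)\,\nu(dx)$, reads (i) off immediately, gets (ii) from termwise (strict) convexity of the integrand, and proves (iii) by applying dominated convergence directly to the normalised integrand $(\e^{-Aux}-1+Aux)/(A^2u^2)$, with the majorants $x^2/2$ for $x>0$ and the value of the quotient at a fixed $\bar u$ for $x<0$; (iv) then follows from (iii) and the linear term. You instead establish one-sided $C^2$ regularity of the cumulant generating function $\kappa$ at the boundary point $0$ (differentiating under the expectation, with essentially the same two majorants, but applied to the law of $L_1$ rather than to the L\'{e}vy measure) and then Taylor-expand. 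The two arguments rest on the same analytic mechanism -- dominated convergence enabled by Assumption \ref{assum1} -- but yours buys two things: it identifies the constant in (iii) explicitly as $K=\frac{1}{2}A^2\bigl(\sigma^2+\int_{\ktR}x^2\,\nu(dx)\bigr)=\frac{1}{2}A^2\Var(L_1)$ and derives (iii) and (iv) from a single expansion; and it handles the degenerate-jump case cleanly, whereas the paper's reduction of (iii) to ``the jump integral tends to some $K'>0$'' is slightly imprecise when $\nu\equiv 0$ (there $K'=0$ and positivity of $K$ comes from the $\sigma^2$ term, which your version makes explicit). The paper's route, conversely, never needs to discuss differentiability of $\kappa$ at the endpoint of its domain, which is precisely the delicate point you correctly flag and resolve; your tilted-measure justification of strict convexity in (ii) is also sound, since non-triviality of $L$ rules out $L_1$ being a.s.\ constant and equivalence of the Esscher measure preserves non-degeneracy.
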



\begin{lem}\label{corr2}
Let $Y$ be a continuous process starting form 
$y\in[0,\bar{\delta}_A)$. Then 
\[\int_0^\infty \Arrowvert Y_t\Arrowvert_{L^\infty(\mathbb{P})}^i\, dt<\infty\] 
if and only if 
\[\intl{0}{\infty}\kappa_A\bigl(\Arrowvert Y_u\Arrowvert_{L^\infty(\mathbb{P})}\bigr)\,du<\infty,\] 
where $i=1$ if $\mu<0$, and $i=2$ if $\mu=0$. Moreover, with $\mu>0$, 
\[\int_0^\infty \Arrowvert Y_t\Arrowvert_{L^\infty(\mathbb{P})}\, dt<\infty\]
implies
\[\intl{0}{\infty}\kappa_A\bigl(\Arrowvert Y_u\Arrowvert_{L^\infty(\mathbb{P})}\bigr)\,du<\infty.\] 
\end{lem}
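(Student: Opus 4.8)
The plan is to reduce both the equivalence and the one-sided implication to a pointwise comparison between $\kappa_A\bigl(g(t)\bigr)$ and $g(t)^i$, where I write $g(t):=\Arrowvert Y_t\Arrowvert_{L^\infty(\mathbb{P})}$. Since $Y$ is a non-increasing, non-negative strategy with $Y_0=y$, the function $g$ is non-increasing and takes values in the interval $[0,y]$, which is a compact subset of the domain $[0,\bar{\delta}_A)$ of $\kappa_A$ because $y<\bar{\delta}_A$; in particular $g$ stays bounded away from the endpoint $\bar{\delta}_A$ at which the cumulant generating function of $L_1$ ceases to be finite. The whole argument then rests on a two-sided bound $c_1x^i\le\kappa_A(x)\le c_2x^i$, valid for $x\in[0,y]$ with constants $0<c_1\le c_2<\infty$: applying it with $x=g(t)$ and integrating over $t\in[0,\infty)$ gives $c_1\int_0^\infty g(t)^i\,dt\le\int_0^\infty\kappa_A\bigl(g(t)\bigr)\,dt\le c_2\int_0^\infty g(t)^i\,dt$, and the equivalence of the two integrability conditions is then immediate.

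To produce this bound I would invoke Lemma \ref{propkappaprop}. Take first $\mu=0$, so $i=2$: there $\kappa_A(x)/x^2\to K>0$ as $x\to0$, and the map $x\mapsto\kappa_A(x)/x^2$ is continuous on $(0,y]$ and strictly positive there (strict convexity together with $\kappa_A(0)=0$ and $\kappa_A'(0^+)=0$ forces $\kappa_A(x)>0$ for $x>0$). Hence it extends to a continuous, strictly positive function on the compact set $[0,y]$, and the extreme value theorem supplies $0<c_1\le c_2<\infty$. The case $\mu<0$, $i=1$, is the same, with limit $\kappa_A(x)/x\to-A\mu=A|\mu|>0$; here the lower bound is in fact automatic, since convexity keeps $\kappa_A$ above its tangent at the origin, giving $\kappa_A(x)\ge A|\mu|\,x$, so one may take $c_1=A|\mu|$ and read off $c_2$ from the continuous extension. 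Either way the comparison yields the claimed equivalence.

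For the last assertion, with $\mu>0$, the same mechanism applies but the sign flips: $\kappa_A(x)/x\to-A\mu<0$, so $\kappa_A$ is negative for small $x>0$ and only the absolute estimate $|\kappa_A(x)|\le c_2x$ on $[0,y]$ (from the bounded continuous extension of $\kappa_A(x)/x$) survives. This gives $\int_0^\infty\bigl|\kappa_A\bigl(g(t)\bigr)\bigr|\,dt\le c_2\int_0^\infty g(t)\,dt<\infty$, so the integral of $\kappa_A\circ g$ converges absolutely, which is precisely the stated one-way implication. The converse genuinely fails: holding the position at a level where $\kappa_A<0$ makes $\int_0^\infty\kappa_A\bigl(g(t)\bigr)\,dt=-\infty$ while $\int_0^\infty g(t)\,dt=+\infty$, which is why only one direction is claimed in this regime.

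I expect the one delicate point to be the strict positivity of the lower constant $c_1$ in the cases $\mu\le0$, equivalently $\kappa_A>0$ on $(0,\bar{\delta}_A)$; this is exactly what the combination of strict convexity, the normalisation $\kappa_A(0)=0$, and the non-negative one-sided derivative $\kappa_A'(0^+)\in\{A|\mu|,0\}$ delivers, so no work beyond Lemma \ref{propkappaprop} is needed. The only structural input is that $g$ remains inside a compact subinterval of $[0,\bar{\delta}_A)$, so that $\kappa_A\circ g$ is well defined and the comparison constants do not degenerate as $x\to\bar{\delta}_A$; this is guaranteed by the non-increasing, hence $[0,y]$-valued, nature of the liquidation strategy.
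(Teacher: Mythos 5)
Your proof is correct and takes essentially the same route as the paper's: both reduce the lemma to the small-$x$ asymptotics of $\kappa_A$ supplied by Lemma \ref{propkappaprop} (iii)--(iv) and then compare $\kappa_A\bigl(\supnormP{Y_t}\bigr)$ with $\supnormP{Y_t}^i$ under the integral. The only cosmetic difference is that you globalise the two-sided bound $c_1x^i\leq\kappa_A(x)\leq c_2x^i$ to the whole compact range $[0,y]$ via continuity and the extreme value theorem, whereas the paper bounds $\kappa_A$ only on a neighbourhood $(0,\bar{x})$ of the origin and splits the time integral at the instant $s$ after which $\supnormP{Y_t}<\bar{x}$; both arguments rely on the monotone, $[0,y]$-valued nature of the strategy (explicit in your write-up, implicit in the paper's claim that integrability forces $Y_t\rightarrow 0$).
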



In order to reduce problem (\ref{optprob1}), we also require the following technical result.

\begin{lem}\label{propui}
For any $Y\in\mathcal{A}(y)$, the process $M^Y$ given by 
\begin{gather} \label{M}
M^Y_t=\exp\biggl(\intl{0}{t}-AY_{u-}\,dL_u-
\intl{0}{t}\kappa_A(Y_{u})\,du\biggr),\qquad t\geq 0,
\end{gather}
is a uniformly integrable martingale.  
\end{lem}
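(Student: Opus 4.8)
The plan is to show first that $M^Y$ is a non-negative local martingale, and then to upgrade this to a uniformly integrable martingale by means of an $L^p$-bound with $p>1$, the room for which is created precisely by the strict inequality $y<\bar{\delta}_A$ of Assumption \ref{assum2}. Throughout I use that an admissible $Y$ is continuous with $0\le Y_t\le y$, so $Y_{u-}=Y_u$ and the integrand $-AY_{u-}$ takes values in the compact interval $[-Ay,0]\subset(\bar{\delta},0]$, on which $\kappa$ is finite and smooth. For the local martingale property, fix $b>0$ with $by<-\bar{\delta}$ and set
\[N^{(b)}_t=\exp\Bigl(\intl{0}{t}-bY_{u-}\,dL_u-\intl{0}{t}\kappa(-bY_u)\,du\Bigr).\]
Applying It\^o's formula to $N^{(b)}=\mathrm{e}^{X}$ (with $X$ the exponent) using the decomposition $L_t=\mu t+\sigma W_t+\int_{\ktR}x\,(N(t,dx)-t\nu(dx))$, the predictable finite-variation part of $N^{(b)}$ has integrand
\[N^{(b)}_{s-}\Bigl(-bY_s\mu+\tfrac12 b^2Y_s^2\sigma^2+\int_{\ktR}\bigl(\mathrm{e}^{-bY_s x}-1+bY_s x\bigr)\nu(dx)-\kappa(-bY_s)\Bigr),\]
which vanishes identically by the L\'evy--Khintchine representation of $\kappa$ on $(\bar{\delta},0]$. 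Hence $N^{(b)}$ is a non-negative local martingale, therefore a supermartingale with $\ktE[N^{(b)}_t]\le N^{(b)}_0=1$; in particular $M^Y=N^{(A)}$ is a supermartingale.

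Since $y<\bar{\delta}_A$, I may fix $p\in(1,\bar{\delta}_A/y)$, so that $pAy<-\bar{\delta}$ and $N^{(pA)}$ is defined as above. Factorising,
\[\bigl(M^Y_t\bigr)^p=N^{(pA)}_t\exp\Bigl(\intl{0}{t}g(Y_u)\,du\Bigr),\qquad g(x):=\kappa(-pAx)-p\kappa(-Ax).\]
The crux is to bound $g$ by $\kappa_A$. As $\mu\le0$ (Assumption \ref{mu-assump}), Lemma \ref{propkappaprop} gives $\kappa_A(0)=0$, $\kappa_A>0$ on $(0,\bar{\delta}_A)$, and the asymptotics $\kappa_A(x)\sim Kx^2$ if $\mu=0$ and $\kappa_A(x)\sim -A\mu x$ if $\mu<0$, as $x\to0$. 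A short Taylor expansion shows that in both regimes the potential leading-order linear term of $g$ cancels, so $g(x)=O(x^2)$; hence $g/\kappa_A$ extends continuously to $[0,y]$ and is bounded there by some $C>0$, giving $g(x)\le C\kappa_A(x)$ on $[0,y]$. As $\kappa_A$ is increasing on $[0,\bar{\delta}_A)$, this yields $g(Y_u)\le C\kappa_A\bigl(\|Y_u\|_{L^\infty(\ktP)}\bigr)$, and Lemma \ref{corr2} together with the admissibility conditions (\ref{Yintegrable})--(\ref{Y^2integrable}) provides the deterministic bound $\intl{0}{\infty}\kappa_A\bigl(\|Y_u\|_{L^\infty(\ktP)}\bigr)\,du<\infty$. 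Consequently $\exp\bigl(\int_0^t g(Y_u)\,du\bigr)\le C_0<\infty$ uniformly in $t$ and $\omega$, so with the supermartingale property from Step 1,
\[\ktE\bigl[(M^Y_t)^p\bigr]\le C_0\,\ktE\bigl[N^{(pA)}_t\bigr]\le C_0\qquad\text{for all }t\ge0.\]

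The family $\{M^Y_t\}_{t\ge0}$ is thus bounded in $L^p$ for some $p>1$, hence uniformly integrable. Applying the same factorisation to the stopped processes $M^Y_{t\wedge\tau_n}$, where $(\tau_n)$ localises $M^Y$ and using $\ktE[N^{(pA)}_{t\wedge\tau_n}]\le1$ (optional stopping for the supermartingale $N^{(pA)}$), shows $\{M^Y_{t\wedge\tau_n}\}_n$ is uniformly integrable for each $t$; therefore $\ktE[M^Y_t]=\lim_n\ktE[M^Y_{t\wedge\tau_n}]=1$, and a non-negative local martingale with constant expectation $1$ is a genuine martingale. Combined with the uniform integrability of $\{M^Y_t\}_{t\ge0}$, this proves $M^Y$ is a uniformly integrable martingale.

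The main obstacle is the comparison $g\le C\kappa_A$ on $[0,y]$: it is exactly here that Assumption \ref{mu-assump} ($\mu\le0$, ensuring $\kappa_A\ge0$ on $(0,\bar{\delta}_A)$ and the cancellation of the linear part of $g$) and the matching of the admissibility exponents $i\in\{1,2\}$ in Lemma \ref{corr2} to the two asymptotic regimes of $\kappa_A$ are indispensable, while the strictness $y<\bar{\delta}_A$ in Assumption \ref{assum2} is what furnishes the exponent $p>1$ needed for uniform integrability.
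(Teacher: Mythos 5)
Your proof is correct, and while the first step coincides with the paper's, your route to uniform integrability is genuinely different. Both arguments start by applying It\^o's formula to the exponent and cancelling the drift via the L\'evy--Khintchine form of $\kappa_A$ in (\ref{L-K}), so that $M^Y$ (and your $N^{(b)}$) is a nonnegative local martingale, hence a supermartingale. From there the paper removes the drift of $L$, writes $M^Y=\exp\bigl(X-K(1)\bigr)$ with $X$ driven by the martingale part of $L$ and $K(\theta)_t=\intl{0}{t}\tilde{\kappa}_A(\theta Y_u)\,du$, and invokes the abstract criterion of \citet{KS} (their Definition 3.1 and Theorem 3.2), verifying the limit condition (\ref{uicdt}) by convexity of $\tilde{\kappa}_A$, Lemma \ref{corr2} and dominated convergence. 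You instead exploit the strict gap in Assumption \ref{assum2}: choosing $p\in(1,\bar{\delta}_A/y)$ so that $pAy<-\bar{\delta}$, the factorisation $(M^Y_t)^p=N^{(pA)}_t\exp\bigl(\intl{0}{t}g(Y_u)\,du\bigr)$ with $g(x)=\kappa(-pAx)-p\kappa(-Ax)$, the comparison $g\leq C\kappa_A$ on $[0,y]$ (the linear terms of $g$ cancel for any $\mu$; positivity and monotonicity of $\kappa_A$ use $\mu\leq 0$ from Assumption \ref{mu-assump}), and Lemma \ref{corr2} give the deterministic bound $\sup_{t}\ktE\bigl[(M^Y_t)^p\bigr]\leq C_0$; the stopped version of the same bound, with optional stopping for the supermartingale $N^{(pA)}$, upgrades the local martingale to a true one, and $L^p$-boundedness yields uniform integrability. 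What each approach buys: yours is self-contained (no external UI criterion) and delivers strictly more, namely $L^p$-boundedness and hence $L^p$-convergence $M^Y_t\rightarrow M^Y_\infty$, but it consumes the full strength of the strict inequality $y<\bar{\delta}_A$ and, as written, Assumption \ref{mu-assump} through $\kappa_A\geq 0$; the paper's route needs neither, since it works with $\tilde{\kappa}_A$ (nonnegative and nondecreasing for any $\mu$) and only rescales the argument by factors $1-\delta<1$, staying inside the domain rather than reaching beyond $Ay$. Two cosmetic points: for $\mu<0$ your claim $g(x)=O(x^2)$ requires the second-order left expansion of $\kappa$ at $0$ (available from the finite second moment in Assumption \ref{assum1}), though $g(x)=o(x)$, immediate from Lemma \ref{propkappaprop} (iv) applied with risk aversion $pA$, already suffices against $\kappa_A(x)\sim -A\mu x$; and in the It\^o step the jump integral should strictly be split into compensated small jumps (handled in $L^2$ via the finite second moment) and large jumps (compensated in $L^1$ thanks to $pAy<-\bar{\delta}$) --- the paper's proof glosses over the same point, so this does not separate the two arguments.
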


It follows from Lemma \ref{corr2} and Lemma \ref{propui} that, for any $Y\in\mathcal{A}(y)$, 
the process $M^Y$ is a strictly positive martingale closed by $M^Y_\infty $. We can therefore 
define a new probability measure 
$\ktQ^Y$ by 
\[\frac {d\ktQ^Y }{d\ktP }=M^Y_\infty.\] 

Based on the idea in \citet{SST} Theorem 2.8, 
and with reference to (\ref{optprob1}) and Lemma \ref{propui}, we calculate that 
\begin{align}
&\inf_{Y\in\mathcal{A}(y)}\ktE\biggl[\exp\biggl(-\int_0^\infty AY_{t-}\, dL_t
+A\int_0^\infty \xi_tF(\xi_t)\, dt\biggr)\biggr]\notag\\
=&\inf_{Y\in\mathcal{A}(y)}\ktE\biggl[\exp\biggl(-\int_0^\infty AY_{t-}\, dL_t
-\intl{0}{\infty}\kappa_A(Y_t)dt+\intl{0}{\infty}\Bigl(\kappa_A(Y_t)
+A\xi_tF(\xi_t)\Bigr)dt\biggr)\biggr]\notag\\
=&\inf_{Y\in\mathcal{A}(y)}\ktE^{\ktQ^Y}\biggl[\exp\biggl(\intl{0}{\infty}\Big(\kappa_A(Y_t)+
A\xi_tF(\xi_t)\Big)\, dt\biggr)\biggr]\notag\\
\leq &\inf_{Y\in\mathcal{A}_D(y)}\exp\biggl[
\int_0^\infty \biggl(\kappa_A(Y_t)+A\xi_tF(\xi_t)\biggr)\,dt\,\biggr] . \label{reducedtoAD}
\end{align}
Now suppose that $Y^*$ is a solution to problem 
\begin{gather*}
\inf_{Y\in\mathcal{A}_D(y)}\exp\biggl[
\int_0^\infty \biggl(\kappa_A(Y_t)+A\xi_tF(\xi_t)\biggr)\,dt\,\biggr] .
\end{gather*}
since $\mathcal{A}_D(y)\subset\mathcal{A}(y)$.
Then it must also be a solution to problem (\ref{optprob1}), and hence equality holds in (\ref{reducedtoAD}). 
This is because otherwise there must be some $\tilde{Y}\in\mathcal{A}_D(y)$ which coincides with some 
sample path of some $Y\in\mathcal{A}(y)$ such that
\begin{align*}
&\exp\biggl[
\int_0^\infty \biggl(\kappa_A(\tilde{Y}_t)+A\xi_tF(\tilde{\xi}_t)\biggr)\,dt\,\biggr]\notag\\
< &\ktE^{\ktQ^Y}\biggl[\exp\biggl(\intl{0}{\infty}\Big(\kappa_A(Y_t)+
A\xi_tF(\xi_t)\Big)\, dt\biggr)\biggr]\notag\\
< &\exp\biggl[
\int_0^\infty \biggl(\kappa_A(Y^*_t)+A\xi_tF(\xi^*_t)\biggr)\,dt\,\biggr]. 
\end{align*}
This contradicts $Y^*$ being a solution to problem (\ref{reducedtoAD}). 
We conclude that it is sufficient to solve the problem 
\begin{gather}
V(y)=\inf_{Y\in\mathcal{A}_D(y)}J(Y), \quad y\in[0,\bar{\delta}_A) \label{valuefunc}
\end{gather}
where $V$ denotes the value function and $J$ is given by 
\begin{gather}\label{objfunc}
J(Y)=\intl{0}{\infty}\biggl(\kappa_A(Y_t)+A\xi_tF(\xi_t)\biggr)\, dt. 
\end{gather}

If we take $Y\in\mathcal{A}_D(y)$ such that $Y_t=\bigl(t-\sqrt{y}\bigr)^2$, for $t\in[0,\sqrt{y}]$, 
and $Y_t=0$, for $t>\sqrt{y}$, then it can be checked that 
\begin{gather}\label{J(Y)finite}
J(Y)=\intl{0}{\sqrt{y}}\biggl(\kappa_A\Bigl(\bigl(t-\sqrt{y}\bigr)^2\Bigr)
+A\bigl(2\sqrt{y}-2t\bigr)F\bigl(2\sqrt{y}-2t\bigr)\biggr)\, dt<\infty, 
\end{gather}
which implies that $V<\infty$. Lemma \ref{propkappaprop} implies $\kappa_A\geq 0$, if $\mu\leq 0$. Hence we have $0\leq V<\infty$, for all $\mu\leq 0$. 
%

Assumption \ref{mu-assump} excludes some degeneracy. 
To see this, suppose $\mu>0$. Then Lemma \ref{propkappaprop} (iv) 
implies that there exists some constant $z>0$ such that $-\infty<\kappa_A(z)<0$. 
Suppose that the investor's initial stock position is $z$ and consider the strategy 
$Y\in\mathcal{A}_D(z)$ satisfying
$Y'_t=-\xi_t=0$ for $t\in[0,s]$ with some $s>0$. Then 
\begin{gather*}
V(z)\leq\int_0^s\kappa_A(z)\,dt+V(z)=s\kappa_A(z)+V(z).
\end{gather*}
This can happen only if $V(z)=-\infty$. 
Let $\bar{Y}\in\mathcal{A}_D(y)$ with $y\geq z$ and set $t_z=\inf\{t\geq 0\mid \bar{Y}_t=z\}<\infty$. Then
\[V(y)\leq\int_0^{t_z}\biggl(\kappa_A(\bar{Y}_t)+A\bar{\xi}_tF(\bar{\xi}_t)\biggr)\,dt+V(z),\]
which implies that $V(y)=-\infty$.  As $z$ can be chosen to be
arbitrarily close to zero, it follows that $V(y)=-\infty$, for all $y\in(0,\bar{\delta}_A)$. 
We therefore conclude that the value function is degenerate when $\mu>0$.
Let $y\in(0,\bar{\delta}_A)$, and suppose (in order to get a contradiction) that there exists an optimal 
strategy $Y^*\in\mathcal{A}_D(y)$. Define $\tilde{\kappa}_A$ to be the function which is identical to $\kappa_A$ with $\mu=0$. 
Then with reference to the L\'{e}vy-Khintchine representation of $L$ (see (\ref{L-K})), we have $\kappa_A(x)=-A\mu x+\tilde{\kappa}_A(x)$. 
By Assumption \ref{assumF} and Lemma \ref{propkappaprop}, we have 
that $\tilde{\kappa}_A(Y^*_t)+A\xi_tF(\xi^*_t)$ is positive. 
Thus, 
\[V(y)=\intl{0}{\infty}\biggl(-A\mu Y^*_t+\tilde{\kappa}_A(Y^*_t)+A\xi^*_tF(\xi^*_t)\biggr)\, dt=-\infty, \quad \mu>0,\]
implies $\int_0^\infty Y^*_t\, dt=\infty$, which contradicts the definition of an admissible strategy.
We conclude that if $\mu>0$,  then there is no optimal admissible liquidation strategy. 





\begin{rem}\label{Rem_meanvariance}
It is mentioned in \citet{SST} that for the Almgren-Chriss model with Brownian motion describing the  
unaffected stock price, the problem of optimising the final cost/reward for a CARA investor over 
a set of adapted strategies provides the same optimal solution as for the problem of optimising for the same
model over deterministic strategies, but with a mean-variance optimisation criterion. 
When the unaffected stock price is not a Brownian motion, but a general L\'{e}vy process,
this relationship no longer holds.
To see this, we know that for our optimisation problem,  
the set of admissible strategies $\mathcal{A}(y)$ can be replaced by $\mathcal{A}_D(y)$. Then in view of (\ref{key}), 
it suffices to consider 
\[\inf_{Y\in\mathcal{A}_D(y)}\mathbb{E}\bigl[\e^{-AC^Y_\infty}\bigr],\]
where 
\[C^Y_\infty=c+sy-\frac{1}{2}\alpha y^2+\int_0^\infty Y_{t-}\, dL_t-\int_0^\infty \xi_tF(\xi_t)\, dt.\]
It can be calculated that 
\begin{gather*}
\mathbb{E}[C^Y_\infty]=c+sy-\frac{1}{2}\alpha y^2+\mu\int_0^\infty  Y_t\, dt-\int_0^\infty \xi_tF(\xi_t)\, dt 
\end{gather*}
and 
\begin{gather*}
\Var(C^Y_\infty)=\sigma^2\intl{0}{\infty}Y_t^2\,dt
+\intl{0}{\infty}\biggl(\int_{\ktR}Y_t^2x^2\,\nu(dx)\biggr)\,dt.
\end{gather*}
Then, 
\begin{align*}
&\,\mathbb{E}\bigl[\exp\bigl(-AC^Y_\infty\bigr)\bigr]\notag\\
          =&\,\exp\biggl[-A\mathbb{E}[C^Y_\infty]+\frac{1}{2}A^2\sigma^2\intl{0}{\infty}Y_t^2\,dt+
              \intl{0}{\infty}\int_{\ktR}
              \Bigl(e^{-AY_tx}-1+AY_tx\Bigr)\,\nu(dx)\,dt\,\biggr]\notag\\
          =&\,\exp\biggl[-A\mathbb{E}[C^Y_\infty]+\frac{1}{2}A^2\Var(C^Y_\infty)
              +\intl{0}{\infty}\int_{\ktR}
              \Bigl(e^{-AY_tx}-1+AY_tx-\frac{1}{2}A^2Y_t^2x^2\Bigr)\,\nu(dx)\,dt\,\biggr]. \label{mean-var}
\end{align*}
From the above expression, it is clear that the problem is equivalent to
\[\sup_{Y\in\mathcal{A}_D(y)}\mathbb{E}[C^Y_\infty]-\frac{1}{2}A\Var(C^Y_\infty), \]
if $\nu(\ktR)\equiv 0$, i.e. the L\'{e}vy process $L$ has no jumps.  
 However, for any general L\'{e}vy process, this equivalence does not hold. 
\qed
\end{rem}

\begin{rem} \label{Rem_buy}
Suppose that the investor is also allowed to buy 
shares. Then in order for the final cash position to be well-defined, we need, in addition to the conditions in Definition \ref{Def_Y}, 
to assume that any admissible strategy $Y$ satisfy $\limit{t}{\infty}t\supnormP{Y_t}=0$ (see Lemma \ref{lemtZ} and proof of Proposition \ref{propconverge} for more details). We also suppose $Y$ is non-negative, that $Y_t<\bar{\delta}_A$ for all $t\geq 0$, and that 
$Y_t=y+\int_0^t\xi_u\,du$ with $\xi_t\in\ktR$. Denote by $\mathcal{A}^{\pm}(y)$ the set of all such admissible strategies, 
and by $\mathcal{A}^{\pm}_D(y)$ the collection of all deterministic admissible strategies. Then by similar arguments as previously,
the liquidation problem can be reduced to
\[V(y)=\inf_{Y\in\mathcal{A}^{\pm}_D(y)}\intl{0}{\infty}\biggl(\kappa_A(Y_t)+A|\xi_t|F\bigl(|\xi_t|\bigr)\biggr)\, dt.\]
Let $Y\in\mathcal{A}^{\pm}_D(y)$ be a strategy including intermediate buying. Then there exist 
times $r$ and $s$ with $r<s$ such that $Y_r=Y_s$ and $Y_t>Y_r$ for all $t\in(r,s)$. 
Consider an admissible strategy $X$ such that $X_t=Y_r$ for $t\in(r,s)$ and $X_t=Y_t$ for 
$t\in[0,r]\cup[s,\infty)$. Then with reference to Lemma \ref{propkappaprop}, 
\begin{gather*}
\int_r^s\biggl(\kappa_A(X_u)+A|\xi^X_u|F\bigl(|\xi^X_u|\bigr)\biggr)\,du =\kappa_A(X_r)(s-r)<\int_r^s\biggl(\kappa_A(Y_u)+A|\xi_u|F\bigl(|\xi_u|\bigr)\biggr)\,du, 
\end{gather*}
where $\xi^X$ is the speed process associated with $X$. Therefore, $J(X)<J(Y)$. This shows $Y$ is not optimal. 
So even if it is allowed, it is not optimal to do any intermediate buying of shares.
%
%
\qed
\end{rem}

\section{Solution to the problem}

With reference to the previous section, recall that the original optimal liquidation problem (\ref{originalprob}) 
is equivalent to solving  
\begin{gather*}
V(y)=\inf_{Y\in\mathcal{A}_D(y)}\intl{0}{\infty}\biggl(\kappa_A(Y_t)+A\xi_tF(\xi_t)\biggr)\, dt ,
\end{gather*}
with
\[dY_t=-\xi_t\, dt, \qquad Y_0=y\in[0,\bar{\delta}_A).\]
According to the theory of optimal control, the corresponding 
Hamilton-Jacobi-Bellman equation is given by
\begin{gather}\label{HJB}
\kappa_A(y)+\inf_{x\geq 0}\bigl\{Ax F(x)-xv'(y)\bigr\}=0 ,  
\end{gather}
with associated boundary condition $v(0)=0$. Let $G:[0,\infty)\rightarrow[0,\infty)$ denote the inverse 
function of $x\mapsto x^2F'(x)$. Assumption \ref{assumF} and Lemma \ref{prop3} together imply that 
$G$ is a continuous, strictly increasing function satisfying $G(0)=0$. 

\begin{prop} \label{propsolntoHJB}
Equation (\ref{HJB}) with boundary condition $v(0)=0$ has a classical solution given by
\begin{gather}
v(y)=\int_{0}^{y}\biggl\{\frac{\kappa_A(u)}{G\bigl(\frac{\kappa_A(u)}{A}\bigr)}
+AF\biggl(G\biggl(\frac{\kappa_A(u)}{A}\biggr)\bigg)\biggr\}\,du,
\qquad 0\leq y<\bar{\delta}_A .  \label{v(y)}
\end{gather}
\end{prop}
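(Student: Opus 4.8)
The plan is to proceed by verification: I take the candidate $v$ defined by (\ref{v(y)}) and check directly that it is a classical ($C^1$) solution of (\ref{HJB}) satisfying $v(0)=0$. The guiding observation, which motivates the form of (\ref{v(y)}) and which I will use throughout, is that the minimiser of the inner problem at the point $y$ should be $x^*=G(\kappa_A(y)/A)$, where $G$ is the inverse of $x\mapsto x^2F'(x)$. By the definition of $G$ this yields the identity $A(x^*)^2F'(x^*)=\kappa_A(y)$.

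First I would settle regularity, so that $v$ is a genuine classical solution and $v'$ equals the integrand in (\ref{v(y)}). It suffices to show that this integrand is continuous on $[0,\bar{\delta}_A)$. The only delicate point is at $u=0$, where $\kappa_A(0)=0$ and $G(0)=0$ produce a $0/0$ expression. Writing $w=\kappa_A(u)/A$ and $x^*=G(w)$ with $(x^*)^2F'(x^*)=w$, the first term equals $\kappa_A(u)/G(w)=Aw/x^*=Ax^*F'(x^*)$, which tends to $0$ as $u\to 0$ by Lemma \ref{prop3} (since $\lim_{x\to 0}xF'(x)=0$), while the second term tends to $AF(0)=0$ by Assumption \ref{assumF}(ii). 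Hence the integrand extends continuously by the value $0$ at the origin, so $v$ is well defined and $C^1$ on $[0,\bar{\delta}_A)$. Using the identity above, the derivative can be rewritten in the compact form $v'(y)=A\bigl(F(x^*)+x^*F'(x^*)\bigr)$.

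Next I would verify (\ref{HJB}) pointwise. Fix $y\in[0,\bar{\delta}_A)$ and set $p=v'(y)$, and consider the inner function $h(x)=AxF(x)-px$. By Assumption \ref{assumF}(iii) the map $x\mapsto xF(x)$ is strictly convex, so $h$ is strictly convex on $[0,\infty)$; consequently $h'$ is strictly increasing and any stationary point is the unique global minimiser. The stationarity condition $h'(x)=A\bigl(F(x)+xF'(x)\bigr)-p=0$ is solved precisely by $x=x^*$, by the formula for $p=v'(y)$ obtained above. Substituting, the value of the infimum is $h(x^*)=Ax^*F(x^*)-x^*\cdot A\bigl(F(x^*)+x^*F'(x^*)\bigr)=-A(x^*)^2F'(x^*)=-\kappa_A(y)$, so that $\kappa_A(y)+\inf_{x\geq 0}h(x)=0$, which is exactly (\ref{HJB}). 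The boundary condition $v(0)=0$ is immediate from the integral representation; in the degenerate case $y=0$ one has $x^*=0$, the infimum is attained at the boundary with value $0=-\kappa_A(0)$, consistently.

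I expect the main obstacle to be the regularity argument near $u=0$, namely resolving the $0/0$ behaviour of the integrand and thereby confirming $v\in C^1$, together with the need to ensure that the stationary point $x^*$ is a genuine global minimiser of $h$ over $[0,\infty)$ rather than merely a critical point. The strict convexity in Assumption \ref{assumF}(iii) is precisely what makes the latter clean (a single stationary point of a strictly convex function is automatically the global minimum, and positivity of $p$ keeps the minimiser in $[0,\infty)$), while Lemma \ref{prop3} and Assumption \ref{assumF}(ii) are exactly what control the former.
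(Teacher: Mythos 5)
Your proposal is correct and follows essentially the same route as the paper: you verify that $x^*=G\bigl(\kappa_A(y)/A\bigr)$ satisfies the first-order condition and makes the Hamiltonian vanish, relying on strict convexity of $x\mapsto AxF(x)-xv'(y)$ (Assumption \ref{assumF}(iii)) to conclude it is the global minimiser, just as in the paper's conditions (\ref{x*_cdt1})--(\ref{x*cdt2}). Your inline resolution of the $0/0$ behaviour via $\kappa_A(u)/G\bigl(\kappa_A(u)/A\bigr)=Ax^*F'(x^*)\to 0$ is precisely the content of the paper's Lemma \ref{lem_x/G(x)}, which likewise rests on $\lim_{x\to 0}xF'(x)=0$ from Lemma \ref{prop3}.
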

The next result provides an expression for the optimal liquidation strategy, and states that the value function $V$ 
identifies with the function $v$ in (\ref{v(y)}). 
\begin{thm}\label{thmvarify}
Let $y\in[0,\bar{\delta}_A)$. Define 
\begin{gather}
\tau=\int_0^{y}\frac{1}{G\bigl(\frac{\kappa_A(u)}{A}\bigr)}\,du.  \label{tau}
\end{gather} 
Let $Y^*$ satisfy 
\begin{gather} \label{opt_strategy}
\intl{Y^*_t}{y}\frac{1}{G\bigl(\frac{\kappa_A(u)}{A}\bigr)}\,du=t,\,\,\text{ if }t\leq\tau,
\quad\text{ and }\quad Y^*_t=0,\,\,\text{ if }t>\tau. 
\end{gather}
Then $Y^*\in\mathcal{A}_D(y)$, and its associated speed process $\xi^*$ satisfies 
\begin{gather}
\xi^*_t=G\biggl(\frac{\kappa_A(Y^*_t)}{A}\biggr), \qquad\text{ for all }t\geq 0.  \label{xi*}
\end{gather}
Moreover, $V$ in (\ref{valuefunc}) is equal to $v$ in (\ref{v(y)}), 
for all $y\in[0,\bar{\delta}_A)$, and $Y^*$ is the unique optimal liquidation strategy for problem (\ref{originalprob}). 
\end{thm}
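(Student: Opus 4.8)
The plan is to read Theorem \ref{thmvarify} as a verification result for the deterministic control problem (\ref{valuefunc}), leaning on Proposition \ref{propsolntoHJB}, which guarantees that the candidate $v$ in (\ref{v(y)}) is a $C^1$ solution of the HJB equation (\ref{HJB}). First I would check that $Y^*$ is well defined: since $G$ and $\kappa_A$ are continuous, strictly increasing and vanish at $0$ (Lemma \ref{prop3} and Lemma \ref{propkappaprop}, together with the properties of $G$ noted before Proposition \ref{propsolntoHJB}), the map $a\mapsto\int_a^{y}\frac{1}{G(\kappa_A(u)/A)}\,du$ is continuous and strictly decreasing from $\tau$ at $a=0$ to $0$ at $a=y$. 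Hence (\ref{opt_strategy}) pins down a unique continuous, non-increasing $Y^*$ on $[0,\tau)$, extended by $0$ beyond $\tau$. Differentiating (\ref{opt_strategy}) implicitly in $t$ gives $\dot Y^*_t=-G(\kappa_A(Y^*_t)/A)$, which is exactly the feedback form (\ref{xi*}); moreover the first-order condition for the minimiser of $x\mapsto AxF(x)-xv'(y)$ in (\ref{HJB}), combined with the explicit form of $v'$ in (\ref{v(y)}), shows that this minimiser equals $G(\kappa_A(y)/A)$, so $\xi^*$ is precisely the HJB-optimal control in feedback form.

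Next I would verify $Y^*\in\mathcal{A}_D(y)$. Continuity, absolute continuity and monotonicity are immediate from the construction, so the only real content is the integrability requirement in Definition \ref{Def_Y}. Changing variables via $dt=-dY^*_t/G(\kappa_A(Y^*_t)/A)$ gives $\int_0^\infty\kappa_A(Y^*_t)\,dt=\int_0^{y}\frac{\kappa_A(w)}{G(\kappa_A(w)/A)}\,dw$, which is the first summand of the integrand defining $v(y)$. Since $v(y)<\infty$ and both summands are non-negative (because $\kappa_A\geq0$ when $\mu\leq0$, by Lemma \ref{propkappaprop}), this integral is finite, and Lemma \ref{corr2} (applicable under Assumption \ref{mu-assump}) then turns it into (\ref{Yintegrable})/(\ref{Y^2integrable}), yielding $Y^*\in\mathcal{A}_D(y)$.

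For optimality I would run the classical verification estimate. As $v$ is $C^1$ and solves (\ref{HJB}), for any $Y\in\mathcal{A}_D(y)$ with speed $\xi$ the pointwise inequality $\kappa_A(Y_t)+A\xi_tF(\xi_t)\geq\xi_t v'(Y_t)=-\tfrac{d}{dt}v(Y_t)$ holds; integrating on $[0,T]$ gives $\int_0^T(\kappa_A(Y_t)+A\xi_tF(\xi_t))\,dt\geq v(y)-v(Y_T)$. The crucial step is the passage $T\to\infty$: I would argue that every admissible $Y$ satisfies $Y_T\downarrow0$, since $Y$ is non-increasing with $\int_0^\infty\kappa_A(Y_t)\,dt<\infty$ while $\kappa_A$ is strictly positive away from $0$ on $[0,\bar{\delta}_A)$ for $\mu\leq0$, so a strictly positive limit would force the integral to diverge. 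Continuity of $v$ with $v(0)=0$ then yields $J(Y)\geq v(y)$, hence $V(y)\geq v(y)$. Repeating the computation along $Y^*$ with equality throughout (because $\xi^*$ is the HJB minimiser) gives $J(Y^*)=v(y)$, so $V(y)=v(y)$ and $Y^*$ is optimal.

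Finally, uniqueness would follow from strict convexity: $\mathcal{A}_D(y)$ is convex, and $J$ is strictly convex since $\kappa_A$ is strictly convex (Lemma \ref{propkappaprop}) and $x\mapsto xF(x)$ is strictly convex (Assumption \ref{assumF}(iii)), while $Y\mapsto\xi$ is linear; thus $J$ has at most one minimiser over $\mathcal{A}_D(y)$, necessarily $Y^*$. The claim for the original problem (\ref{originalprob}) then follows from the Section~3 reduction culminating in (\ref{reducedtoAD}), which identifies the adapted and deterministic infima and forces any adapted optimiser to agree pathwise with the unique deterministic optimiser $Y^*$. I expect the main obstacle to be the asymptotic analysis at infinity, namely rigorously justifying $v(Y_T)\to0$ for arbitrary admissible strategies and handling the case $\tau=\infty$ in which $Y^*$ reaches $0$ only in the limit, since this is where admissibility, the sign and growth of $\kappa_A$, and the convergence of the improper integrals all interact.
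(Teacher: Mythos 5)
Your proposal is correct and follows essentially the same verification route as the paper's proof: deriving the feedback form (\ref{xi*}) by differentiating (\ref{opt_strategy}), establishing admissibility via the change of variables $\int_0^\tau\kappa_A(Y^*_t)\,dt=\int_0^y \kappa_A(w)/G\bigl(\kappa_A(w)/A\bigr)\,dw$ together with Lemma \ref{corr2}, and then the standard HJB verification inequality integrated over $[0,T]$ with $v(Y_T)\rightarrow 0$ as $T\rightarrow\infty$. The only minor divergence is the uniqueness step, where you argue via strict convexity of $J$ on the convex set $\mathcal{A}_D(y)$ (using strict convexity of $\kappa_A$ and of $x\mapsto xF(x)$), whereas the paper relies on the equality condition in the pointwise HJB inequality together with the reduction argument following (\ref{reducedtoAD}); both arguments are sound and of comparable weight.
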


Note that since is $G$ continuous, (\ref{xi*}) implies that the strategy $Y^*$ 
in (\ref{opt_strategy}) is continuously differentiable. Since the functions $\kappa_A$ and $G$ are both strictly 
increasing, it follows from (\ref{xi*}) that with a larger stock position at time $t$, the associated 
optimal liquidation speed at time $t$ is larger. Moreover, it can be shown by the strict convexity of the cumulant 
generating function of $L_1$ that $A\mapsto\kappa_A(x)/A$ is strictly increasing. Hence, the optimal 
liquidation speed at any time is strictly increasing in the risk aversion parameter $A$. These two 
relations coincide with the intuition that with a larger position of shares, the investor potentially encounters 
bigger risk from the market volatility, as any tiny fluctuation of the stock price will be amplified by the large number of 
shares held. It is therefore optimal to liquidate faster.  Also if the investor is more risk averse, then he 
cares more about the volatility risk, which makes him employ a liquidation strategy with a larger 
speed of sale. Observe that given an initial stock position $y\in[0,\bar{\delta}_A)$, the quantity $\tau$ in (\ref{tau}) 
indicates the liquidation time for the optimal liquidation strategy $Y^*$. 
Depending on the properties of the temporary impact function $F$, $\tau$ may or may 
not be finite.
The next result provides some sufficient conditions for  
the optimal liquidation strategy $Y^*$ to have a finite liquidation time.

\begin{prop}
Under the condition that $y>0$
\begin{itemize}
\item[(i)]suppose $\mu<0$ and there exist constants $p<1$ and $K>0$ such that 
$\lim_{x\rightarrow 0}x^pF'(x)=K$, then $\tau<\infty$.
\item[(ii)]suppose $\mu=0$ and there exist constants $p<1$ and $K>0$ such that 
$\lim_{x\rightarrow 0}x^pF'(x)=K$. If $p\in[0,1)$, then $\tau=\infty$. If $p<0$, then $\tau<\infty$. 
\end{itemize}  \label{proptau}
\end{prop}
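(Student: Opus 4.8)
The plan is to reduce everything to the integrability of a pure power singularity at $u=0$. Indeed, on any interval $[\delta_0,y]$ bounded away from the origin the integrand $1/G(\kappa_A(u)/A)$ is continuous and strictly positive (since $\kappa_A$ and $G$ are strictly increasing and $\kappa_A(u)>0$ there), so only the behaviour of the integrand as $u\to 0$ can make $\tau$ diverge. The whole argument therefore amounts to identifying the rate at which $G(\kappa_A(u)/A)\to 0$ and applying the elementary test that $\int_0^y u^{-\theta}\,du<\infty$ iff $\theta<1$.

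First I would pin down the behaviour of $G$ near $0$. Writing $H(x)=x^2F'(x)$ for the strictly increasing function (Assumption \ref{assumF}(iv)) whose inverse is $G$, the hypothesis $\lim_{x\to 0}x^pF'(x)=K$ gives $H(x)=x^{2-p}\bigl(x^pF'(x)\bigr)\sim Kx^{2-p}$, and $p<1$ ensures $2-p>1>0$, so $H(x)\to 0$, consistent with $G(0)=0$. Turning this into a two-sided bound, for any $\varepsilon\in(0,K)$ there is $\delta>0$ with $(K-\varepsilon)x^{2-p}\le H(x)\le (K+\varepsilon)x^{2-p}$ on $(0,\delta)$; inverting these inequalities (legitimate because $H$ is strictly increasing) yields, for $y$ in a right-neighbourhood of $0$,
\[
\Bigl(\tfrac{y}{K+\varepsilon}\Bigr)^{1/(2-p)}\le G(y)\le\Bigl(\tfrac{y}{K-\varepsilon}\Bigr)^{1/(2-p)}.
\]

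Next I would feed in the asymptotics of $\kappa_A$ from Lemma \ref{propkappaprop}. For $\mu<0$, part (iv) gives $\kappa_A(u)/A\sim -\mu u$, i.e.\ a two-sided bound $c'u\le \kappa_A(u)/A\le c''u$ near $0$; for $\mu=0$, part (iii) gives $\kappa_A(u)/A\sim (K_0/A)u^2$ for the constant $K_0>0$ of that lemma, i.e.\ $c'u^2\le \kappa_A(u)/A\le c''u^2$. In either case, writing $a=1$ for $\mu<0$ and $a=2$ for $\mu=0$, I may apply the increasing function $G$ monotonically and then invoke the power bound for $G$ just obtained, producing constants $0<c_1\le c_2$ with
\[
c_1\,u^{a/(2-p)}\le G\bigl(\kappa_A(u)/A\bigr)\le c_2\,u^{a/(2-p)}
\]
near $0$. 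Hence the integrand is comparable to $u^{-a/(2-p)}$, and the power-integral test finishes each case: for $\mu<0$ the exponent is $\theta=1/(2-p)$, and $p<1$ forces $2-p>1$, so $\theta<1$ and $\tau<\infty$; for $\mu=0$ the exponent is $\theta=2/(2-p)$, which satisfies $\theta<1$ exactly when $p<0$ (giving $\tau<\infty$), whereas $p\in[0,1)$ yields $\theta\ge 1$ and thus $\tau=\infty$.

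The only delicate point is bookkeeping: the hypotheses supply limits rather than exact equivalents, so the comparison constants ($K\pm\varepsilon$, together with the constants hidden in the $\sim$ for $\kappa_A$) must be propagated simultaneously through the inversion of $H$ and through the composition with $\kappa_A$. Keeping the monotonicity of $H$ (hence the well-definedness and monotonicity of $G$) explicit is what makes the two-sided sandwich rigorous and lets me legitimately compose the two asymptotic bounds; once the comparison $c_1u^{a/(2-p)}\le G(\kappa_A(u)/A)\le c_2u^{a/(2-p)}$ is in hand, the remaining integrability discussion is routine.
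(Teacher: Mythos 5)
Your proposal is correct and takes essentially the same route as the paper's proof: both extract the asymptotics $G(y)\sim (y/K)^{1/(2-p)}$ near zero from the hypothesis $\lim_{x\to 0}x^pF'(x)=K$ (the paper via the limit $u^{1/(2-p)}/G(u)\to K^{1/(2-p)}$ with $u=x^2F'(x)$, you via an explicit two-sided inversion of $H(x)=x^2F'(x)$), compose with the small-$u$ behaviour of $\kappa_A$ from Lemma \ref{propkappaprop} (linear when $\mu<0$, quadratic when $\mu=0$), sandwich the integrand $1/G\bigl(\kappa_A(u)/A\bigr)$ between constant multiples of $u^{-a/(2-p)}$, and finish with the power-integral test, noting finiteness away from the origin. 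Your version merely makes the comparison constants more explicit; there is no substantive difference in method or in the case analysis.
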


\section{Approximation for exponential L\'evy model}

In models for stock prices involving L\'evy processes, it is common to consider exponential L\'evy processes \citep[see e.g.][etc]{MS,EK,Barn}. 
However, it is common in the optimal liquidation literature to use linear model as opposed to exponential models due to tractability and
the short time horizons involved. For practical implementation of our model one could of course directly fit the data to a linear L\'{e}vy model.
However families of distributions that fit observed stock market data well for the exponential L\'{e}vy model are known and obviously the
distribution of the jumps change when you take the exponential. We therefore investigate how to linearise exponential L\'{e}vy models
and how this affects the L\'{e}vy measure.
To this end, we are going to derive a L\'{e}vy process which can be regarded as a linear approximation for a corresponding exponential L\'{e}vy process. We show that this 
L\'{e}vy process satisfies all of the assumptions of being the driving process of the unaffected stock price in the 
liquidation model introduced in previous sections. Therefore, our optimal liquidation strategy derived in the previous section can be 
regarded as an approximation for the result of the corresponding exponential L\'{e}vy model. This linear approximation argument is reasonable since (the majority of) liquidation is usually completed in a very short time. 

Consider a non-trivial, one dimensional, $\ktF$-adapted L\'{e}vy process $\tilde{L}$ 
which admits the canonical decomposition
\begin{gather} \label{Lbar}
\tilde{L}_t=\tilde{\mu}t+\tilde{\sigma}\tilde{W}_t+\int_{|z|\geq 1}z\,\tilde{N}(t,dz)+\int_{|z|<1}z\,\big(\tilde{N}(t,dz)-t\tilde{\nu}(dz)\big), 
\quad t\geq 0, 
\end{gather}
where $\tilde{\mu}\in\ktR$ and $\tilde{\sigma}\geq 0$ are two constants, $\tilde{W}$ is a standard Brownian motion, 
$\tilde{N}$ is a Poisson random measure which is independent of $\tilde{W}$ with compensator $t\tilde{\nu}(dz)$, and $\tilde{\nu}$ is the L\'{e}vy measure associated with $\tilde{L}$. 
We assume that $\tilde{L}$ possesses the following properties. 

\begin{assum} \label{assum3}
We assume that $\tilde{\nu}$ is absolutely continuous with respect to Lebesgue measure, and that 
\begin{gather}
\int_{|z|\geq 1}\mathrm{e}^{2z}\,\tilde{\nu}(dz)<\infty.  \label{finiteexpomoment}
\end{gather}
\end{assum}

Suppose the unaffected stock price is described by the process $\tilde{S}^u$ satisfying 
\begin{gather*} 
\tilde{S}^u_t=\tilde{s}\exp\bigl(\tilde{L}_t\bigr), \quad t\geq 0,  
\end{gather*}
where $\tilde{s}>0$ is some constant denoting the initial stock price. 
Note that (\ref{finiteexpomoment}) 
ensures $\tilde{S}_t^u$ is be square integrable, for all $t\geq 0$ \citep[see e.g.][Theorem 3.6]{Kypr}.
Suppose the affected stock price at time $t\geq 0$ is given by 
\begin{gather*}
\tilde{S}_t=\tilde{s}\exp\bigl(\tilde{L}_t\bigr)+I_t, 
\end{gather*}
where $I_t=\alpha(Y_t-Y_0)-F(\xi_t)$ is the price impact at time $t$ appearing in the previous 
liquidation model with function $F$ satisfying Assumption \ref{assumF} \citep[][study a liquidation model with the affected stock price in this form with a geometric Brownian motion]{GS}. 
By It\^{o}'s formula, for all $t\geq 0$, $\tilde{S}_t$ can be rewritten as 
\begin{gather*}
\tilde{S}_t=\tilde{s}+\int_0^t\tilde{S}_{u-}^u\tilde{m}\,du+\int_0^t\tilde{S}_{u-}^u\tilde{\sigma}\,d\tilde{W}_u
+\int_0^t\int_{\ktR}\tilde{S}_{u-}^u\bigl(\mathrm{e}^z-1\bigr)\,\big(\tilde{N}(t,dz)-t\tilde{\nu}(dz)\big)+I_t,
\end{gather*}
where $\tilde{m}=\tilde{\mu}+\frac{\tilde{\sigma}^2}{2}
+\int_{\ktR}(\mathrm{e}^z-1-z\mathbbm{1}_{\{|z|<1\}})\,\tilde{\nu}(dz)$. 
In order to approximate the exponential L\'evy model, consider the process $\hat{S}$ such that 
\begin{gather*} 
\hat{S}_t=\tilde{s}+\tilde{s}\tilde{m}t+\tilde{s}\tilde{\sigma}\tilde{W}_t
+\int_{\ktR}\tilde{s}\bigl(\mathrm{e}^z-1\bigr)\,\big(\tilde{N}(t,dz)-t\tilde{\nu}(dz)\big)+I_t, \quad t\geq 0, 
\end{gather*}
which can be considered as a linear approximation of $\tilde{S}$. Recall that the affected stock price in the preceding model is given by 
\begin{gather*}
S_t=s+L_t+I_t, \quad t\geq 0, 
\end{gather*}
where $L_t=\mu t+\sigma W_t+\int_{\ktR}x\,\big(N(t,dx)-t\nu(dx)\big)$. Comparing this to the expression of 
$\hat{S}_t$, it can be seen that if we take $s=\tilde{s}$ and choose $L$ to be such that 
\begin{gather} \label{candidate}
L_t=\tilde{s}\tilde{m}t+\tilde{s}\tilde{\sigma}\tilde{W}_t
+\int_{\ktR}\tilde{s}\bigl(\mathrm{e}^z-1\bigr)\,\big(\tilde{N}(t,dz)-t\tilde{\nu}(dz)\big), \qquad t\geq 0, 
\end{gather}
then it follows that 
\[\hat{S}_t=\tilde{s}+L_t+I_t, \quad \text{ for all } t\geq 0. \]
We may therefore consider $\hat{S}$ 
as the affected stock price process in the liquidation model introduced in previous sections. The next proposition verifies that $L$ with the above expression is a L\'{e}vy process satisfying Assumption \ref{assum1}. 

\begin{prop} \label{thmnuL}
Let $L$ be given by (\ref{candidate}). Write $\hat{L}=L/\tilde{s}$. 
Then $\hat{L}$ is an $\ktF$-adapted L\'evy process whose L\'evy measure, denoted by $\hat{\nu}$, satisfies 
\begin{gather*} 
\hat{\nu}(dx)=\frac{1}{x+1}\,\tilde{f}\Bigl(\ln(x+1)\Bigr)\,dx, \qquad x>-1, \, x\neq 0. 
\end{gather*}
Therefore, $L$ is an $\ktF$-adapted L\'evy process satisfying 
Assumption \ref{assum1}.
\end{prop}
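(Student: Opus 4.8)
The plan is to treat $\hat{L}$ as the L\'evy process obtained from $\tilde{L}$ by replacing each jump of size $z$ with a jump of size $\mathrm{e}^z-1$, so that the claimed L\'evy measure is the image of $\tilde{\nu}$ under the map $g(z)=\mathrm{e}^z-1$, and then to read off Assumption~\ref{assum1} from the fact that $g$ takes values in $(-1,\infty)$. Throughout I write $\tilde{f}$ for the Lebesgue density of $\tilde{\nu}$, which exists by Assumption~\ref{assum3}, so that $\tilde{\nu}(dz)=\tilde{f}(z)\,dz$, and I record that $g$ is a smooth strictly increasing bijection from $\ktR$ onto $(-1,\infty)$ with inverse $g^{-1}(x)=\ln(x+1)$ and $dz=dx/(x+1)$.

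First I would check the integrability that makes the compensated jump integral in the definition of $\hat{L}$ a well-defined square-integrable martingale, namely $\int_\ktR(\mathrm{e}^z-1)^2\,\tilde{\nu}(dz)<\infty$. Splitting the domain, on $\{|z|<1\}$ one has $(\mathrm{e}^z-1)^2\leq Cz^2$ and $\int_{|z|<1}z^2\,\tilde{\nu}(dz)<\infty$ since $\tilde{\nu}$ is a L\'evy measure; on $\{|z|\geq1\}$ the left tail is harmless because $(\mathrm{e}^z-1)^2\leq1$ there for $z\leq-1$ while $\tilde{\nu}$ has finite mass away from the origin, and the right tail is exactly controlled by Assumption~\ref{assum3} through $(\mathrm{e}^z-1)^2\leq \mathrm{e}^{2z}$ for $z\geq0$ together with (\ref{finiteexpomoment}). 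This is the step where (\ref{finiteexpomoment}) is essential, and I expect the tail bookkeeping here to be the main (if modest) obstacle. The same bound shows that $\hat{L}_1$, and hence $L_1=\tilde{s}\hat{L}_1$, has finite second moment.

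Next I would identify the characteristic exponent. Since $\tilde{W}$ and $\tilde{N}$ are independent, $\ktE[\mathrm{e}^{i\theta\hat{L}_t}]$ factorises, and the exponential formula for the compensated Poisson integral (legitimate by the $L^2$ bound above) gives $\ktE[\mathrm{e}^{i\theta\hat{L}_t}]=\mathrm{e}^{t\psi(\theta)}$ with
\[
\psi(\theta)=i\theta\tilde{m}-\tfrac12\tilde{\sigma}^2\theta^2+\int_\ktR\Bigl(\mathrm{e}^{i\theta(\mathrm{e}^z-1)}-1-i\theta(\mathrm{e}^z-1)\Bigr)\,\tilde{\nu}(dz).
\]
Performing the change of variables $x=g(z)$ in the jump integral, using $\tilde{\nu}(dz)=\tilde{f}(\ln(x+1))\,dx/(x+1)$, turns it into $\int_{(-1,\infty)}(\mathrm{e}^{i\theta x}-1-i\theta x)\,\hat{\nu}(dx)$ with $\hat{\nu}(dx)=\frac{1}{x+1}\tilde{f}(\ln(x+1))\,dx$. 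Thus $\psi$ is in L\'evy--Khintchine form, with full centering (permitted because the finite second moment gives $\int x^2\,\hat{\nu}(dx)<\infty$), so by uniqueness of the triplet and stochastic continuity $\hat{L}$ is an $\ktF$-adapted L\'evy process with Brownian coefficient $\tilde{\sigma}$ and L\'evy measure $\hat{\nu}$, which is the displayed formula. Equivalently, $\hat{\nu}$ is the image measure $g_{\ast}\tilde{\nu}$ because the jumps of $\hat{L}$ are $g(\Delta\tilde{L})$, and the change-of-variables formula yields the stated density.

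Finally I would verify Assumption~\ref{assum1} for $L=\tilde{s}\hat{L}$. The second moment is finite by the first step. For the exponential moments, the decisive observation is that $\hat{\nu}$ assigns no mass to $(-\infty,-1]$; hence for every $\lambda<0$ we have $\int_{|x|\geq1}\mathrm{e}^{\lambda x}\,\hat{\nu}(dx)=\int_{[1,\infty)}\mathrm{e}^{\lambda x}\,\hat{\nu}(dx)\leq\hat{\nu}\bigl([1,\infty)\bigr)<\infty$. By the standard criterion for exponential moments of L\'evy processes this gives $\ktE[\mathrm{e}^{\lambda\hat{L}_1}]<\infty$ for all $\lambda<0$, equivalently $\ktE[\mathrm{e}^{\delta L_1}]<\infty$ for all $\delta<0$ upon setting $\lambda=\tilde{s}\delta$. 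In particular the set $\{\delta<0\mid \ktE[\mathrm{e}^{\delta L_1}]<\infty\}$ is non-empty, so $L$ satisfies Assumption~\ref{assum1}, completing the proof.
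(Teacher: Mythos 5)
Your proof is correct, and its core is the same as the paper's: both identify $\hat{\nu}$ as the image of $\tilde{\nu}$ under $z\mapsto \e^z-1$, control the right tail through Assumption \ref{assum3} (via $(\e^z-1)^2\leq \e^{2z}$ for $z\geq 0$), and obtain the negative exponential moments essentially for free because $\hat{\nu}$ charges only $(-1,\infty)$, so $\e^{ux}\leq 1$ on $[1,\infty)$ when $u\leq 0$. The one genuine difference is how the L\'evy property itself is certified. The paper constructs the image random measure $\hat{N}(A,B)=\tilde{N}\bigl(A,\ln(B\cap(-1,\infty)+1)\bigr)$ explicitly, checks that $\bigl(\hat{N}(t,B)\bigr)_{t\geq 0}$ is Poisson with intensity $\hat{\nu}(B)$ and that $\int_{\ktR}(x^2\wedge 1)\,\hat{\nu}(dx)<\infty$, and invokes Kallenberg's Corollary 15.7, which yields the pathwise jump structure directly. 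You instead compute the characteristic exponent via the exponential formula and a change of variables, then appeal to uniqueness of the L\'evy--Khintchine triplet. That route works, but note a small looseness: the identity $\ktE\bigl[\e^{\mathrm{i}\theta\hat{L}_t}\bigr]=\e^{t\psi(\theta)}$ by itself does not establish independent, stationary increments; you are implicitly using the standard fact that a compensated Poisson integral of a fixed deterministic $L^2(\tilde{\nu})$ function against a time-homogeneous Poisson random measure is itself a L\'evy process --- precisely the fact that the paper's explicit construction of $\hat{N}$ makes rigorous. A minor advantage on your side: verifying $\int_{\ktR}(\e^z-1)^2\,\tilde{\nu}(dz)<\infty$ upfront settles both the well-definedness of the compensated integral and the finiteness of the second moment in one computation, where the paper treats the near-zero part (its (\ref{epsilonx^2nudx})) and the tail on $[\ln 2,\infty)$ in separate steps; your bound $\int_{|x|\geq 1}\e^{\lambda x}\,\hat{\nu}(dx)\leq\hat{\nu}\bigl([1,\infty)\bigr)$ is also slightly more direct than the paper's substitution back to $\tilde{\nu}$ in (\ref{forremarkinsec5}), though the content is identical.
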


\begin{rem}
From equation (\ref{forremarkinsec5}) (in the proof of Proposition \ref{thmnuL}) we know that 
\[\int_{|x|\geq 1}\mathrm{e}^{ux}\,\hat{\nu}(dx)<\infty, \quad\text{ for all }\,u\leq 0.\]
This implies that $\bar{\delta}$ given by (\ref{bardelta}) is equal 
to $+\infty$, and therefore, Assumption \ref{assum2} is satisfied for any initial stock position $y>0$. 
In other words, if we consider an 
exponential L\'evy model and use the approximation scheme discussed above, we do not need to concern any restriction on the maximum volume of liquidation.  \qed
\end{rem}

With $L$ given by (\ref{candidate}) and $\hat{L}$ defined in Proposition \ref{thmnuL}, in view of (\ref{valuefunc})-(\ref{objfunc})
we consider the optimisation problem 
\begin{gather} \label{appro_prob}
V(y)=\inf_{Y\in\mathcal{A}_D(y)}\intl{0}{\infty}
\biggl(\hat{\kappa}_{\tilde{A}}(Y_t)+A\xi_tF(\xi_t)\biggr)\, dt ,\quad y\geq 0, 
\end{gather}
where $A>0$ denotes the investor's risk aversion, $\tilde{A}=A\tilde{s}$ 
and $\hat{\kappa}_{\tilde{A}}:[0,\infty)\rightarrow[0,\infty)$ is defined by 
$\hat{\kappa}_{\tilde{A}}(x)=\hat{\kappa}(-\tilde{A}x)$ with $\hat{\kappa}$ being the cumulant generating 
function of $\hat{L}_1$. 

\begin{thm}\label{Thm_appro}
The unique optimal liquidation speed for problem (\ref{appro_prob}) is given by 
\begin{gather} \label{appro_xi*}
\xi^*_t=G\biggl(\frac{\hat{\kappa}_{\tilde{A}}(Y^*_t)}{A}\biggr), \quad t\geq 0, 
\end{gather}
where $G:[0,\infty)\rightarrow[0,\infty)$ is the inverse function of $x\mapsto x^2F'(x)$ and $Y^*$ is the associated unique optimal admissible stock position process satisfying 
\begin{gather*} 
\intl{Y^*_t}{y}\frac{1}{G\bigl(\frac{\hat{\kappa}_{\tilde{A}}(u)}{A}\bigr)}\,du=t,\,\,\text{ if }t\leq\tau,
\quad\text{ and }\quad Y^*_t=0,\,\,\text{ if }t>\tau, 
\end{gather*}
with $\tau$ defined by
\begin{gather*}
\tau=\int_0^{y}\frac{1}{G\bigl(\frac{\kappa_A(u)}{A}\bigr)}\,du. 
\end{gather*} 
The value function in (\ref{appro_prob}) satisfies 
\begin{gather*}
V(y)=\int_{0}^{y}\biggl\{\frac{\hat{\kappa}_{\tilde{A}}(u)}{G\bigl(\frac{\hat{\kappa}_{\tilde{A}}(u)}{A}\bigr)}
+AF\biggl(G\biggl(\frac{\hat{\kappa}_{\tilde{A}}(u)}{A}\biggr)\bigg)\biggr\}\,du,
\quad y\geq 0 .  
\end{gather*}
\end{thm}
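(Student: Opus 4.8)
The plan is to recognise that problem (\ref{appro_prob}) is not a genuinely new optimisation problem, but precisely an instance of the general problem (\ref{valuefunc})--(\ref{objfunc}) for the particular driving L\'evy process $L$ constructed in (\ref{candidate}). Once this reduction is made, the conclusion follows by a direct application of Proposition \ref{propsolntoHJB} and Theorem \ref{thmvarify}.

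First I would verify that $L$ from (\ref{candidate}) satisfies all the standing assumptions of Sections 2--3. Proposition \ref{thmnuL} already guarantees Assumption \ref{assum1}, and the remark following it shows that $\bar{\delta}=+\infty$, whence $\bar{\delta}_A=-\bar{\delta}/A=+\infty$ and Assumption \ref{assum2} holds for every $y\geq 0$. This is exactly what makes the statement of Theorem \ref{Thm_appro} valid on the whole half-line $[0,\infty)$ rather than on a bounded interval. It remains to note that the linear drift of $L$ in (\ref{candidate}) equals $\mu=\tilde{s}\tilde{m}$, so that Assumption \ref{mu-assump} reduces to $\tilde{m}\leq 0$; under this (standing) sign condition Lemma \ref{propkappaprop} gives $\hat{\kappa}_{\tilde{A}}\geq 0$, which is what places the argument $\hat{\kappa}_{\tilde{A}}(u)/A$ of $G$ into the domain $[0,\infty)$ of $G$ and hence renders the value function well defined.

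The key identity is $\kappa_A=\hat{\kappa}_{\tilde{A}}$, which I would establish from the scaling $\hat{L}=L/\tilde{s}$. Writing $\hat{\kappa}$ for the cumulant generating function of $\hat{L}_1=L_1/\tilde{s}$ and $\kappa$ for that of $L_1$, one has $\hat{\kappa}(x)=\ln\ktE[\mathrm{e}^{xL_1/\tilde{s}}]=\kappa(x/\tilde{s})$, so that
\[
\kappa_A(x)=\kappa(-Ax)=\hat{\kappa}(-A\tilde{s}x)=\hat{\kappa}(-\tilde{A}x)=\hat{\kappa}_{\tilde{A}}(x),\qquad x\in[0,\infty),
\]
since $\tilde{A}=A\tilde{s}$. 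Consequently the integrand $\hat{\kappa}_{\tilde{A}}(Y_t)+A\xi_tF(\xi_t)$ appearing in (\ref{appro_prob}) is literally the integrand $\kappa_A(Y_t)+A\xi_tF(\xi_t)$ of the objective functional $J$ in (\ref{objfunc}), and the two value functions coincide. (This also explains the otherwise puzzling simultaneous use of $\kappa_A$ and $\hat{\kappa}_{\tilde{A}}$ in the definition of $\tau$ in the statement.)

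With these two observations the proof closes at once: applying Proposition \ref{propsolntoHJB} to $L$ yields the stated explicit form of $V$ with $\hat{\kappa}_{\tilde{A}}$ in place of $\kappa_A$, and Theorem \ref{thmvarify} yields that the position process $Y^*$ and the speed $\xi^*=G\bigl(\hat{\kappa}_{\tilde{A}}(Y^*)/A\bigr)$ of the stated form are admissible and constitute the unique optimiser, with liquidation time $\tau$. There is no genuine obstacle here — the result is a corollary of the general theory once the reduction is identified. The only point requiring a little care is the bookkeeping around the two scalings, namely the state scaling $\hat{L}=L/\tilde{s}$ and the parameter scaling $\tilde{A}=A\tilde{s}$, ensuring that the factor $\tilde{s}$ is absorbed consistently into the cumulant generating function and the risk-aversion parameter so that the identity $\kappa_A=\hat{\kappa}_{\tilde{A}}$ holds exactly.
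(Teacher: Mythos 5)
Your proposal is correct and takes essentially the same route as the paper, whose proof of Theorem \ref{Thm_appro} consists of the single sentence that it is a direct consequence of Theorem \ref{thmvarify}. You have merely made explicit the bookkeeping the paper leaves implicit — Proposition \ref{thmnuL} and $\bar{\delta}=+\infty$ securing Assumptions \ref{assum1} and \ref{assum2}, the sign condition $\mu=\tilde{s}\tilde{m}\leq 0$ for Assumption \ref{mu-assump}, and the scaling identity $\kappa_A(x)=\kappa(-Ax)=\hat{\kappa}(-A\tilde{s}x)=\hat{\kappa}_{\tilde{A}}(x)$ — all of which is correct and indeed clarifies the mixed use of $\kappa_A$ and $\hat{\kappa}_{\tilde{A}}$ in the statement.
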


\section{Examples}

In this section, we provide some examples following the approximation scheme discussed in 
the previous section. We consider the process $\tilde{L}$ in (\ref{Lbar}) as a variance gamma (VG) L\'evy process, 
which is obtained by subordinating a Brownian motion using a gamma process. Precisely, we consider 
$\tilde{L}$ to be such that 
\begin{gather*}
\tilde{L}_t=\theta\tau_t+\rho W_{\tau_t}, \quad t\geq 0, 
\end{gather*}
where $\theta\in\ktR$ and $\rho>0$ are some constants, W is a standard Brownian motion and $\tau$ is a 
gamma process such that $\tau_t\sim\Gamma\bigl(\frac{t}{\eta},\frac{1}{\eta}\bigr)$ 
\footnote{$\Gamma(a,b)$ denotes a gamma distribution with shape parameter $a>0$ and rate parameter $b>0$, for 
which the probability density function is given by $f(x)=\frac{b^a}{\Gamma(a)}x^{a-1}\e^{-bx}$, for $x>0$, where 
$\Gamma(\cdot)$ is the gamma function. For any $X\sim\Gamma(a,b)$, $\ktE[X]=\frac{a}{b}$ and $\Var[X]=\frac{a}{b^2}$.}, 
for some constant $\eta>0$. Then $\tilde{L}$ is a VG L\'evy process whose L\'{e}vy density is given by
\begin{gather*}
\tilde{f}(z)=\frac{1}{\eta|z|}\mathrm{e}^{Cz-D|z|}, \quad z\in\ktR, 
\end{gather*}
where 
\[C=\frac{\theta}{\rho^2}\qquad\text{and}\qquad D=\frac{\sqrt{\theta^2+\frac{2\rho^2}{\eta}}}{\rho^2},\]
and its cumulant generating function $\tilde{\kappa}$ admits the expression 
\begin{gather} \label{kappabar}
\tilde{\kappa}(x)=-\frac{1}{\eta}\ln\biggl(1-\frac{x^2\rho^2\eta}{2}-\theta\eta x\biggr) 
\end{gather}
\citep[see e.g.][]{CT}. It can be shown that Assumption \ref{assum3} is satisfied if $D-C>2$. We calculate according to Proposition \ref{thmnuL} that the L\'{e}vy measure $\hat{\nu}$ of the process $\hat{L}$ satisfies 
\begin{numcases}
{\hat{\nu}(dx)=}
\frac{-1}{\eta\ln(x+1)}(x+1)^{C+D-1}\,dx, \qquad x\in (-1,0),\notag\\
\frac{1}{\eta\ln(x+1)}(x+1)^{C-D-1}\,dx, \qquad x\in (0,\infty).\notag  
\end{numcases}
Therefore, the function $\hat{\kappa}_{\tilde{A}}:[0,\infty)\rightarrow[0,\infty)$ in (\ref{appro_prob}), 
denoting it by $\hat{\kappa}_{\tilde{A}}^{VG}$ in the example of VG L\'evy process, is given by 
\begin{gather} \label{kappaofVG}
\hat{\kappa}_{\tilde{A}}^{VG}(u)=-\tilde{A}\tilde{m}u+\int_{-1}^\infty 
\Bigl(\mathrm{e}^{-\tilde{A}ux}-1+\tilde{A}ux\Bigr)\,\hat{\nu}(dx), 
\end{gather}
where the drift parameter $\tilde{m}=\tilde{\kappa}(1)$. 
 
The next result provides 
a lower bound for $\hat{\kappa}_{\tilde{A}}^{VG}$, which later will be useful for 
deciding the limit behaviour of the price impact function. 
\begin{prop} \label{prop_lowerbd}
For $u\geq 0$, write 
\begin{align*}
\underline{\hat{\kappa}}_{\tilde{A}}^{VG}(u)
&=-\tilde{A}\tilde{m}u+\frac{\e}{\eta}\biggl[-\frac{\e^{\tilde{A}u}\tilde{A}u}{C+D+2}
\biggl(\frac{1}{\tilde{A}u}\wedge 1\biggr)^{C+D+2}
+\frac{\e^{\tilde{A}u}}{C+D+1}\biggl(\frac{1}{\tilde{A}u}\wedge 1\biggr)^{C+D+1}\cr
&\qquad+\frac{\tilde{A}u}{C+D+2}-\frac{1+\tilde{A}u}{C+D+1}\biggr], 
\end{align*}
and in particular, for $u\geq \frac{1}{\tilde{A}}$, 
\begin{align*}
\underline{\hat{\kappa}}_{\tilde{A}}^{VG}(u)
&=-\tilde{A}\tilde{m}u+\frac{\e}{\eta}\biggl[\biggl(\frac{1}{\tilde{A}u}\biggr)^{C+D+1}\e^{\tilde{A}u}
\biggl(\frac{1}{C+D+1}-\frac{1}{C+D+2}\biggr)\cr
&\qquad+\frac{\tilde{A}u}{C+D+2}-\frac{1+\tilde{A}u}{C+D+1}\biggr]. 
\end{align*}
Then we have $\hat{\kappa}_{\tilde{A}}^{VG}(u)\geq\underline{\hat{\kappa}}_{\tilde{A}}^{VG}(u)$, for all $u\geq 0$. 
\end{prop}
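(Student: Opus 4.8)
The plan is to bound the L\'evy integral in (\ref{kappaofVG}) from below by discarding its manifestly non-negative part and then controlling what remains with two elementary inequalities. First I would observe that the integrand $\mathrm{e}^{-\tilde{A}ux}-1+\tilde{A}ux$ is non-negative for every $x>-1$ and every $u\geq 0$, being of the form $\mathrm{e}^{y}-1-y\geq 0$ with $y=-\tilde{A}ux$. Since $\hat{\nu}$ is a positive measure, the contribution from $x\in(0,\infty)$ is non-negative and may simply be dropped, leaving
\[
\hat{\kappa}_{\tilde{A}}^{VG}(u)\geq -\tilde{A}\tilde{m}u+\int_{-1}^{0}\bigl(\mathrm{e}^{-\tilde{A}ux}-1+\tilde{A}ux\bigr)\,\frac{-1}{\eta\ln(x+1)}(x+1)^{C+D-1}\,dx .
\]
Substituting $w=x+1\in(0,1)$ and factoring $\mathrm{e}^{-\tilde{A}ux}=\mathrm{e}^{\tilde{A}u}\mathrm{e}^{-\tilde{A}uw}$ recasts this as $\tfrac{1}{\eta}\int_0^1(\cdots)\,\tfrac{-1}{\ln w}\,w^{C+D-1}\,dw$, where the factor in parentheses remains non-negative.

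The first key inequality controls the density factor $\tfrac{-1}{\ln w}$ and is the source of the prefactor $\e$. Maximising $w\mapsto -w\ln w$ on $(0,1)$ gives the value $1/\e$ at $w=1/\e$, hence $-\ln w\leq \tfrac{1}{\e w}$ and therefore $\tfrac{-1}{\ln w}\geq \e w$ on $(0,1)$. Because the remaining factors are non-negative, replacing $\tfrac{-1}{\ln w}$ by $\e w$ preserves the lower bound and yields
\[
\hat{\kappa}_{\tilde{A}}^{VG}(u)\geq -\tilde{A}\tilde{m}u+\frac{\e}{\eta}\int_0^1\bigl(\mathrm{e}^{\tilde{A}u}\mathrm{e}^{-\tilde{A}uw}-1+\tilde{A}u(w-1)\bigr)\,w^{C+D}\,dw .
\]

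It remains to bound this single integral below. Writing $a=\tilde{A}u$, the polynomial part $\int_0^1(-1+a(w-1))w^{C+D}\,dw$ integrates exactly to $\tfrac{a}{C+D+2}-\tfrac{1+a}{C+D+1}$, matching the last two terms of $\underline{\hat{\kappa}}_{\tilde{A}}^{VG}$. For the exponential part $\mathrm{e}^{a}\int_0^1\mathrm{e}^{-aw}w^{C+D}\,dw$ I would use $\mathrm{e}^{-aw}\geq 1-aw$, but only after restricting the domain to $(0,\tfrac{1}{a}\wedge 1)$: on that set $1-aw\geq 0$, so the truncated integrand $(1-aw)w^{C+D}$ is non-negative, and discarding the tail (where the true integrand $\mathrm{e}^{-aw}w^{C+D}$ is still positive) keeps a valid lower bound. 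This truncation is precisely what produces the factors $(\tfrac{1}{\tilde{A}u}\wedge 1)^{C+D+1}$ and $(\tfrac{1}{\tilde{A}u}\wedge 1)^{C+D+2}$. Evaluating $\int_0^{c}(1-aw)w^{C+D}\,dw$ with $c=\tfrac{1}{a}\wedge 1$ gives $\tfrac{c^{C+D+1}}{C+D+1}-\tfrac{a\,c^{C+D+2}}{C+D+2}$, and assembling the three contributions reproduces $\underline{\hat{\kappa}}_{\tilde{A}}^{VG}(u)$ term by term; specialising to $u\geq 1/\tilde{A}$ (so $c=1/a$) gives the second displayed formula.

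The two genuinely delicate points are the choice of bound for $\tfrac{-1}{\ln w}$ and the domain truncation. The naive estimates $\ln w\leq w-1$ or $\ln w\geq 1-\tfrac{1}{w}$ give density bounds that blow up at $w=1$ and do not integrate against a clean monomial, whereas $-\ln w\leq \tfrac{1}{\e w}$ is tailored to cancel exactly one power of $w$ and leave the integrable $w^{C+D}$; this is the main obstacle, since it is the step that dictates the whole shape of the stated bound. The truncation before applying $\mathrm{e}^{-aw}\geq 1-aw$ is the second subtlety, as omitting it would make the estimate badly negative for large $a$. Both steps rely on the non-negativity of the integrand established at the outset to justify dropping or truncating pieces of the integral, so I would keep track of that positivity carefully throughout.
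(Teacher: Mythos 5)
Your proof is correct and follows essentially the same route as the paper's: dropping the non-negative contribution from $x\in(0,\infty)$, bounding the density via $\frac{-1}{\ln w}\geq \e w$ on $(0,1)$ (identical to the paper's observation that $(x+1)\ln(x+1)$ is convex with minimum $-\e^{-1}$), integrating the polynomial part exactly, and applying $\mathrm{e}^{-aw}\geq 1-aw$ only on the truncated interval $\bigl(0,\frac{1}{\tilde{A}u}\wedge 1\bigr)$ before discarding the positive tail. Your closing remarks correctly identify the two steps the paper also relies on, so nothing is missing.
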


In order to compare the optimal strategy for the model involving a VG L\'evy process and the optimal strategy
for the corresponding model with a Brownian motion (i.e. when $\tilde{L}$ is a Brownian motion), 
we derive that the function $\hat{\kappa}_{\tilde{A}}:[0,\infty)\rightarrow[0,\infty)$ in 
(\ref{appro_prob}), denoting it by $\hat{\kappa}_{\tilde{A}}^{BM}$, is given by
\begin{gather} \label{kappaofBM}
\hat{\kappa}_{\tilde{A}}^{BM}(u)
=-\tilde{A}\Bigl(\tilde{\mu}+\frac{\tilde{\sigma}^2}{2}\Bigr)u+\frac{1}{2}\tilde{A}^2\tilde{\sigma}^2u^2,
\end{gather}
where $\tilde{\mu}\in\ktR$ and $\tilde{\sigma}>0$ are some constants which represent the drift and 
volatility of $\tilde{L}$, respectively. In the case, Assumption 
\ref{assum3} is always satisfied. 

Throughout this section we use the following parameters for our VG L\'evy 
process: $\theta=-0.002$, $\rho=0.02$ and $\eta=0.6$ (timescale in days).
For more details on empirical studies of parameters of the VG stock price model we refer to 
\citet{RSS}. For parameters in the Brownian motion case, we choose the parameters such that 
the expectation and the second moment of $\e^{\tilde{L}_t}$ match that of the
VG model.
Hence, $\tilde{\mu}$ and $\tilde{\sigma}$ in (\ref{kappaofBM}) are taken to be such that 
$\tilde{\mu}+\frac{\tilde{\sigma}^2}{2}=\tilde{\kappa}(1)$ and $2\tilde{\mu}+2\tilde{\sigma}^2=\tilde{\kappa}(2)$, 
where $\tilde{\kappa}$ is given by (\ref{kappabar}). Therefore, throughout this section, 
\begin{gather*}
\tilde{\mu}=2\tilde{\kappa}(1)-\frac{\tilde{\kappa}(2)}{2}
\qquad\text{ and }\qquad
\tilde{\sigma}^2=\tilde{\kappa}(2)-2\tilde{\kappa}(1). 
\end{gather*}
Moreover, we choose $\tilde{s}=100$ for simplicity.

\subsection{Power-law price impact function}

Consider the power-law temporary impact function, i.e. $F:[0,\infty)\rightarrow[0,\infty)$ is given by
\[F(x)=\beta x^\gamma,\]
where $\beta>0$ and $\gamma>0$ are constants. This kind of impact function is widely believed to be realistic 
and has been well-studied in the literature of price impact \citep[see e.g.][etc]{LFM,ATHL}. It can be checked that $F$ 
satisfies Assumption \ref{assumF}, and the function $G$ appearing in (\ref{appro_xi*}) is given by 
\[G(x)=\biggl(\frac{x}{\beta\gamma}\biggr)^{\frac{1}{\gamma+1}},\qquad x\geq 0.\]
Applying Proposition \ref{proptau}, we see that if $\hat{L}$ is a strict supermartingale, then $\tau$ 
in (\ref{tau}) is finite, for all $\gamma>0$; if $\hat{L}$ is a martingale, then $\tau=\infty$ for 
$\gamma\in(0,1]$, and $\tau<\infty$ when $\gamma>1$. 
It follows from (\ref{appro_xi*}) that the optimal liquidation speed takes the expression 
\begin{gather} \label{optspeed}
\xi^*_t=\biggl(\frac{\hat{\kappa}_{\tilde{A}}(Y^*_t)}{A\beta\gamma}\biggr)^{\frac{1}{\gamma+1}}, 
\qquad\text{ for all }t\geq 0. 
\end{gather}
We adopt the values of $\beta$ and $\gamma$ suggested in \citet{ATHL} where parameters of 
the power-law temporary impact are studied empirically. In particular, we take $\gamma=0.6$ 
and $\beta=4.7\times 10^{-5}$ 
\footnote{With our notations, the temporary impact function $F$ in \citet{ATHL} is given by 
$F(x)=\beta x^\gamma=\tilde{S}_{0-}\tilde{\beta}\tilde{\sigma}\bigl(\frac{x}{\tilde{V}}\bigr)^\gamma$, 
where $\tilde{V}$ denotes the daily volume of a given stock, the value of exponent $\gamma$ is argued 
to be $0.6$ (as the main result in their paper) and $\tilde{\beta}$ is a constant which is suggested 
to be $0.142$. From the values of parameters of the VG L\'evy process that we have chosen, it can be 
calculated that the volatility $\tilde{\sigma}$ in the Brownian motion case is roughly equal to $0.02$. Comparing 
this number to the values of volatilities and daily volumes of stocks provided in examples in 
\citet{ATHL}, we may take $\tilde{V}=2\times 10^{6}$ as a reasonable choice. Moreover, we choose 
$\tilde{s}=100$ for simplicity. Then $\beta$ is calculated to be $4.7\times 10^{-5}$.

Note that the empirical study in \citet{ATHL} is based on a model parametrised by the volume time which is 
defined as fractions of a daily volume. Therefore, any results of number regarding time derived from a model 
with power-law impact function in this section should be interpreted as volume time. }. 

\begin{figure}[htbp]
\centering
\begin{centering}
{\begin{centering}
\hspace*{-0.1cm}\includegraphics[width=5.5in]{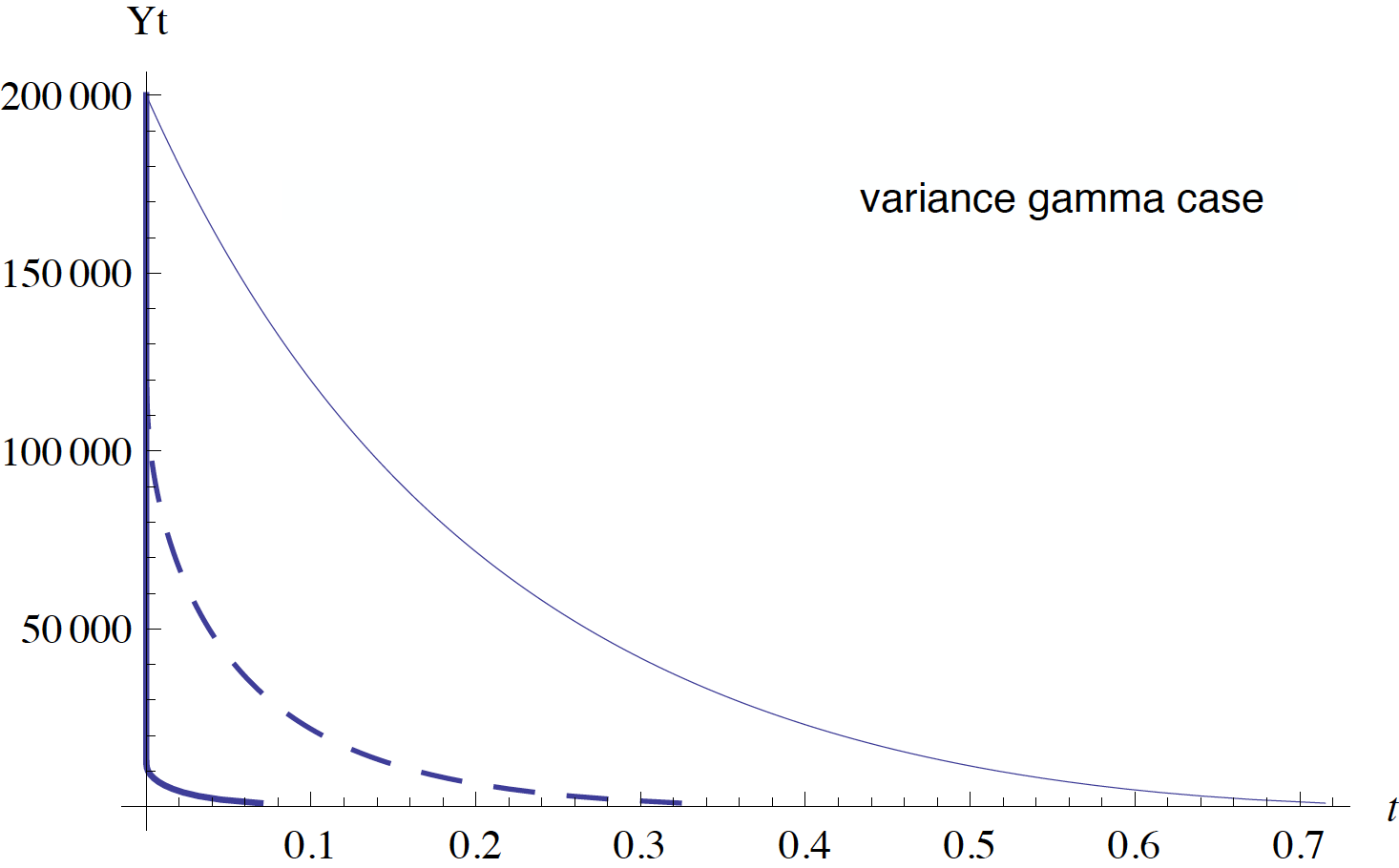}
\end{centering}}
{\begin{centering}
\hspace*{0.14cm}\includegraphics[width=5.5in]{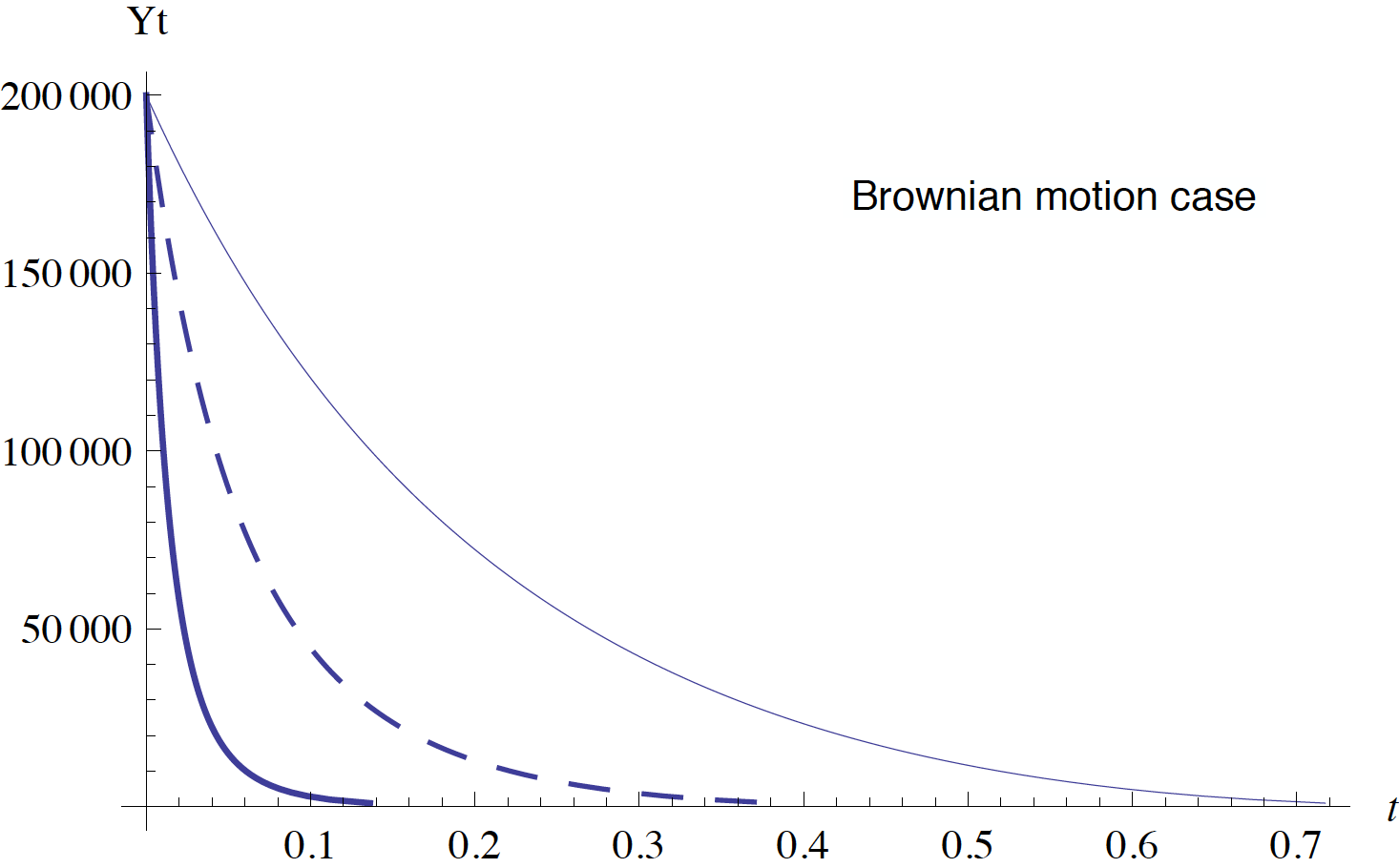}
\end{centering}}
\end{centering}
\caption{{\footnotesize Optimal liquidation trajectories for variance gamma L\'evy process model and Brownian 
motion model with 0.6 power-law temporary impact function. Thin curves are for $A=10^{-6}$, dashed curves are 
when $A=10^{-5}$ and thick curves are for $A=10^{-4}$.}}
\label{power-BMVG}
\end{figure}


Consider a stock with average daily volume $2\times 10^6$. 
Suppose the investor wants to liquidate a position of $2\times 10^5$ 
\footnote{Since the study of parameters of impacts in \citet{ATHL} is based on liquidating the amount of 
shares that weighted as $10\%$ of daily volume, in order to keep consistent with the values of parameters 
of the temporary impact function that we have chosen, we let the initial stock position to be $2\times 10^5$ 
which is $10\%$ of the daily volume that we have chosen as explained before. }
of this stock. Figure \ref{power-BMVG} shows the optimal liquidation trajectories for 
both the VG L\'evy process case and the Brownian 
motion case when the risk aversion parameter $A$ takes values of $10^{-6}$, $10^{-5}$ and $10^{-4}$.
\footnote{In view of the general literature on preferences, these values of $A$ may seem small.
However, in the context of a liquidation model they can viewed as reasonable
if the investor is not sensitive to any large costs which are insignificant compared to 
his total wealth. We refer to \cite{AC2} and \cite{Almg} for more details about the risk aversion parameter for 
the Almgren-Chirss liquidation model.}
We see that when $A=10^{-6}$, the optimal strategies for the two models are almost identical. As $A$ increases, 
the optimal speeds increase in both models, and in particular, speeds increase much faster in the VG model for big positions. 
In each case, liquidation finishes in a short time period, which confirms that the linear approximation 
of the exponential model is reasonable.


As shown in the first graph of Figure \ref{power-BMVG} when $A=10^{-5}$ and $A=10^{-4}$, the 
stock positions drop immediately by a large proportion of its initial value. In order 
to get more details about these two trajectories, we compute that when $A=10^{-5}$ the time spent on 
liquidating $40\%$ of $2\times 10^5$ shares is about $0.00018$ when the  investor follows the optimal 
strategy for the VG case. If the time parametrisation is the same as clock time, then $0.00018$ is just 
a few seconds. If the investor's risk aversion is $A=10^{-4}$, then according to the optimal strategy 
for VG model, he spends roughly $1.34\times 10^{-14}$ amount of time to liquidate 90\% of his initial position.


With a large stock position, due to the nature of jumps of the VG L\'evy process we expect that the investor would liquidate 
much faster than the optimal strategy for Brownian motion model. 
However, the above examples show that with the 0.6 power-law temporary impact function, 
in the VG case, optimal liquidation speeds can be too large for the optimal strategy to be practical, while speeds in the 
Brownian motion model stay in a reasonable range. Intuitively, an unrealistically high optimal liquidation speed can be 
due to price impact for a large trading speed being underestimated. In other words, the cost resulting from large speeds 
is too small. This argument can be confirmed by the expression of the optimal liquidation speed in (\ref{appro_xi*}) 
that if the temporary impact function $F$ has a small growth rate, then growth rate of function $G$ is large, and 
therefore optimal speed can be very high, when stock position is large. 
It is mentioned in \citet{Rosu,Gath} that the impact function should be concave for small trading speeds and convex for 
large speeds. 
Therefore, we next try to explore a mode of growth of the price impact function for which the optimal liquidation 
speeds for the L\'evy model is realistic.


\subsection{A relation between the impact functions of various models}

In this section we derive a connection between a temporary impact function for the L\'evy liquidation model and 
a temporary impact function for the Brownian motion liquidation model
such that the two respective optimal strategies coincide with each other.

Let $F^L:[0,\infty)\rightarrow[0,\infty)$ and $F^{BM}:[0,\infty)\rightarrow[0,\infty)$ be temporary 
impact functions satisfying Assumption \ref{assumF} considered in a L\'evy model and a Brownian motion 
model, respectively. We denote by $G^L:[0,\infty)\rightarrow[0,\infty)$ and $G^{BM}:[0,\infty)\rightarrow[0,\infty)$ 
the inverse functions of $x\mapsto x^2(F^L)'(x)$ and $x\mapsto x^2(F^{BM})'(x)$, respectively. Then 
in view of (\ref{appro_xi*}), the optimal liquidation speed at time $t$ for each model, denoted by $\xi^{L}_t$ 
and $\xi^{BM}_t$, are given by 
\begin{gather*} 
\xi^{L}_t=G^{L}\biggl(\frac{\hat{\kappa}^{L}_{\tilde{A}}(Y^L_t)}{A}\biggr)
\qquad\text{ and }\qquad
\xi^{BM}_t=G^{BM}\biggl(\frac{\hat{\kappa}^{BM}_{\tilde{A}}(Y^{BM}_t)}{A}\biggr), 
\end{gather*}
where $\hat{\kappa}^{L}_{\tilde{A}}$ and $\hat{\kappa}^{BM}_{\tilde{A}}$ are different 
versions for of $\hat{\kappa}_{\tilde{A}}$, and $Y^L$ and 
$Y^{BM}$ are corresponding optimal liquidation strategies in each model. Suppose for all 
$t\geq 0$, $Y^*_t=Y^{L}_t=Y^{BM}_t$, then
\begin{gather} \label{GL=GBM}
G^{L}\biggl(\frac{\hat{\kappa}^{L}_{\tilde{A}}(Y^*_t)}{A}\biggr)
=G^{BM}\biggl(\frac{\hat{\kappa}^{BM}_{\tilde{A}}(Y^{*}_t)}{A}\biggr),\qquad t\geq 0. 
\end{gather}
Write $z=G^{BM}\Bigl(\frac{\hat{\kappa}^{BM}_{\tilde{A}}(Y^{*}_t)}{A}\Bigr)$. So by (\ref{kappaofBM}) we have 
\begin{gather*}
Y^*_t=\frac{\tilde{u}+\sqrt{\tilde{u}^2+2A\tilde{\sigma}^2z^2(F^{BM})'(z)}}{\tilde{A}\tilde{\sigma}^2}, 
\end{gather*}
where $\tilde{u}=\tilde{\mu}+\frac{\tilde{\sigma}^2}{2}$. Then from (\ref{GL=GBM}) we obtain that 
\begin{gather*}
(F^L)'(z)=\frac{1}{Az^2}\,\hat{\kappa}^{L}_{\tilde{A}}\Biggl(\,\frac{\tilde{u}
+\sqrt{\tilde{u}^2+2A\tilde{\sigma}^2z^2(F^{BM})'(z)}}{\tilde{A}\tilde{\sigma}^2}\,\Biggr), 
\end{gather*}
which is equivalent to 
\begin{gather} \label{FL_FBM}
F^L(x)=\int^x_0\frac{1}{Az^2}\,\hat{\kappa}^{L}_{\tilde{A}}\Biggl(\,\frac{\tilde{u}
+\sqrt{\tilde{u}^2+2A\tilde{\sigma}^2z^2(F^{BM})'(z)}}{\tilde{A}\tilde{\sigma}^2}\,\Biggr)\,dz. 
\end{gather}
It can be shown that Assumption \ref{assumF} is satisfied by the above expression. 
We can therefore conclude that if $F^L$ and $F^{BM}$ satisfy (\ref{FL_FBM}), then $Y^L=Y^{BM}$.

Suppose $F^{BM}$ in (\ref{FL_FBM}) follows a power-law such that the optimal speed for the Brownian motion 
case is practically realistic (this kind of model is indeed used in practice). Then it follows from the relation in (\ref{FL_FBM}) 
that in order for the optimal speed in VG case to be practically realistic, the function $F^L$ needs to increase to 
infinity faster than any power function. This is because for the VG L\'evy process case, the lower bound of 
the function $\hat{\kappa}_{\tilde{A}}^{VG}$ given in Proposition \ref{prop_lowerbd} tends to infinity
faster than any power function. 

\section{Proofs}

\begin{proof}[\textbf{Proof of Lemma \ref{prop3}}]
For $\lambda\in(0,1)$ and $x\in(0,\infty)$,
Assumption \ref{assumF} (ii) and (iii) imply that $F(\lambda x)<\lambda F(x)<F(x)$, 
which shows that $F$ is strictly increasing.

The derivative of $x\mapsto xF(x)$, together with the convexity of this function, implies that $\lim_{x\rightarrow 0}xF'(x)$ exists. 
As $F'(x)>0$, for all $x>0$, it follows that $\lim_{x\rightarrow 0}xF'(x)\geq 0$. 
Suppose $\lim_{x\rightarrow 0}xF'(x)>0$.
Then there exist constants $\bar{x}>0$ and $c>0$, such that 
for all $x\in(0,\bar{x})$,  
\[F'(x)>\frac{c}{x}. \]
But then, 
\[F(\bar{x})=\lim_{x\rightarrow 0}\int_x^{\bar{x}}F'(u)\,du
\geq\lim_{x\rightarrow 0}\int_x^{\bar{x}}\frac{c}{u}\,du=\infty,\]
which contradicts the continuity of $F$. Hence, $\lim_{x\rightarrow 0}xF'(x)=0$, and it therefore 
follows that $\lim_{x\rightarrow 0}x^2F'(x)=0$. 
\end{proof}

The next lemma is used in the proof of Proposition \ref{propconverge}. 

\begin{lem}
Let $Z$ be a positive-valued, decreasing process satisfying $\intl{0}{\infty}Z_t dt<\infty$.
Then $tZ_t \rightarrow 0$, as $t\rightarrow \infty$. \label{lemtZ}
\end{lem}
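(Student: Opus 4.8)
The plan is to exploit the monotonicity of $Z$ to control $Z_t$ by the contribution of the integral over a window ending at time $t$, and then to invoke the convergence of $\int_0^\infty Z_s\,ds$ to force that contribution to vanish. The whole argument is pathwise: both the monotonicity and the integrability hypotheses are assumed to hold for (almost) every sample path, so it suffices to treat $Z$ as a fixed non-negative, non-increasing function and to deduce $tZ_t\rightarrow 0$ for each such path.

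First I would fix $t>0$ and compare $Z_t$ with the contribution of $Z$ over the interval $[t/2,t]$. Since $Z$ is non-increasing, $Z_s\geq Z_t$ for every $s\in[t/2,t]$, and hence
\[
\int_{t/2}^t Z_s\,ds \;\geq\; Z_t\int_{t/2}^t ds \;=\; \frac{t}{2}\,Z_t,
\]
so that $0\leq tZ_t\leq 2\int_{t/2}^t Z_s\,ds$. Next I would bound the right-hand side by the tail of the convergent integral: because $Z\geq 0$,
\[
\int_{t/2}^t Z_s\,ds \;\leq\; \int_{t/2}^\infty Z_s\,ds ,
\]
and since $\int_0^\infty Z_s\,ds<\infty$, the tail $\int_{t/2}^\infty Z_s\,ds$ tends to $0$ as $t\rightarrow\infty$ (the Cauchy criterion for improper integrals). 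Combining the two displays gives $0\leq tZ_t\leq 2\int_{t/2}^\infty Z_s\,ds\rightarrow 0$, which is the claim.

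The argument is elementary and I do not anticipate a genuine obstacle; the only points requiring a little care are (i) confirming that ``decreasing'' is meant in the non-strict (non-increasing) sense, so that the inequality $Z_s\geq Z_t$ is valid throughout $[t/2,t]$, and (ii) that convergence of the integral over $[0,\infty)$ indeed forces its tail to vanish. If one prefers to avoid the window $[t/2,t]$ altogether, the same conclusion follows by contradiction: were $t_n Z_{t_n}\geq\varepsilon$ along some sequence $t_n\rightarrow\infty$, then monotonicity would give $\int_{t_n/2}^{t_n}Z_s\,ds\geq\varepsilon/2$, and passing to a subsequence for which the intervals $[t_n/2,t_n]$ are disjoint would make $\int_0^\infty Z_s\,ds$ diverge, contradicting the hypothesis.
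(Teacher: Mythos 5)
Your proof is correct and complete, but it takes a genuinely different and more elementary route than the paper. The paper first establishes $\liminf_{t\rightarrow\infty}tZ_t=0$ by contradiction (if $tZ_t>c$ eventually, then $\int^\infty Z_t\,dt\geq\int^\infty \frac{c}{t}\,dt=\infty$), and then upgrades this to a full limit via the integration-by-parts identity $tZ_t=\int_0^t u\,dZ_u+\int_0^t Z_u\,du$ (obtained from It\^o's formula, using that the monotone process $Z$ has finite variation): the first term is decreasing, the second increasing, so $\sup_{t\geq r}tZ_t$ and $\inf_{t\geq r}tZ_t$ are squeezed between quantities that both converge to $\int_0^\infty u\,dZ_u+\int_0^\infty Z_u\,du$, which together with the vanishing liminf forces $\lim_{t\rightarrow\infty}tZ_t=0$. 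Your window argument $0\leq tZ_t\leq 2\int_{t/2}^{t}Z_s\,ds\leq 2\int_{t/2}^{\infty}Z_s\,ds\rightarrow 0$ reaches the same conclusion in one step, pathwise, using only monotonicity and the Cauchy criterion for the convergent integral; it avoids the liminf/limsup bookkeeping and any appeal to stochastic calculus, and your two points of care (non-strict monotonicity, vanishing tails) are exactly the right ones and both hold here. The paper's route buys nothing extra for this statement beyond the incidental identity for $tZ_t$; yours is the shorter and more transparent argument, and your fallback contradiction via disjoint windows $[t_n/2,t_n]$ is likewise valid.
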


\begin{proof}
Suppose $\liminf_{t\rightarrow \infty}tZ_t>0$, then there exists some constant $c$ such that 
\[\liminf_{t\rightarrow \infty}tZ_t> c>0.\] 
This implies that we can find some $s\geq 0$ such that for all $t\geq s$, 
\[Z_t> \frac{c}{t}.\]
Hence 
\[\intl{s}{\infty}Z_t\, dt\geq \intl{s}{\infty}\frac{c}{t}\,dt=\infty, \]
which contradicts $\int_0^\infty Z_t\, dt<\infty$.
Thus, we have shown that 
\begin{gather}
\liminf_{t\rightarrow \infty}tZ_t=0.  \label{liminf}
\end{gather}
We know that $Z$ is a decreasing process, 
which is of finite variation. By It\^{o}'s formula we calculate that
\[tZ_t=\int_0^t u\, dZ_u+\int_0^t Z_u\, du.\]
It can be observed that $t\mapsto \int_0^t u dZ_u$ 
is negative and decreasing while $t\mapsto \int_0^t Z_u du$ 
is positive and increasing. Then, 
\begin{align}
0&\leq \sup_{t\geq r}tZ_t
\leq\sup_{t\geq r}\int_0^t u\, dZ_u+\sup_{t\geq r}\int_0^t Z_u\, du
=\int_0^r u\, dZ_u+\int_0^\infty Z_u\, du.   \label{suptZ}
\end{align}
Also,
\begin{align}
&\inf_{t\geq r}tZ_t
\geq\inf_{t\geq r}\int_0^t u\, dZ_u+\inf_{t\geq r}\int_0^t Z_u\, du
=\int_0^\infty u\, dZ_u+\int_0^r Z_u\, du.   \label{inftZ}
\end{align}
Taking $r$ to infinity in (\ref{inftZ}) and (\ref{suptZ}), and by (\ref{liminf}) we have 
\begin{align*}
0&\leq\limsup_{t\rightarrow \infty}tZ_t
=\limit{r}{\infty}\sup_{t\geq r}tZ_t
\leq\int_0^\infty u\, dZ_u+\int_0^\infty Z_u\, du,\\
0&=\liminf_{t\rightarrow \infty}tZ_t
=\limit{r}{\infty}\inf_{t\geq r}tZ_t
\geq\int_0^\infty u\, dZ_u+\int_0^\infty Z_u\, du. 
\end{align*}
Therefore, we conclude that $\lim_{t\rightarrow \infty}tZ_t=0$. 
\end{proof}

\begin{proof}[\textbf{Proof of Proposition \ref{propconverge}}]
\mbox{}
\begin{itemize}
\item[(i)]Let $f$ be the characteristic function of $L_t$, so 
\[f(u)=\ktE[\mathrm{e}^{\mathrm{i}uL_t}]= \mathrm{e}^{t\psi(u)},\]
where $\psi(u)$ is given by the L\'{e}vy-Khintchine representation of $L$.
By Assumption \ref{assum1} we know that $f$, hence $\psi$, are twice differentiable at 0. 
Hence, we calculate that $f'(0)=\mathrm{i}\ktE[L_t]=t\psi'(0)$ and $f''(0)=-\ktE[L_t^2]$, and therefore, 
\begin{gather*}
\ktE[L_t^2]=(\mu t)^2-\psi''(0)t. \label{kappafinite}
\end{gather*}
Then, 
\begin{gather}
\ktE\bigl[(L_tY_t)^2\bigr]\leq \ktE[L_t^2]\supnormP{Y_t}^2
=\mu^2 \bigl(t\supnormP{Y_t}\bigr)^2-\psi''(0)t\supnormP{Y_t}^2.  \label{LTYTinq}
\end{gather}
If $\mu\neq 0$, then for any $Y\in\mathcal{A}(y)$, 
$\bigl(\supnormP{Y_t}\bigr)_{t\geq 0}$ and $\bigl(\supnormP{Y_t}^2\bigr)_{t\geq 0}$ are continuous, 
positive and decreasing. The integrability condition in (\ref{Yintegrable}) implies that 
$\int_0^\infty \Arrowvert Y_t\Arrowvert_{L^\infty(\mathbb{P})}^2\, dt<\infty$. Therefore, according to 
Lemma \ref{lemtZ} we have 
\[\limit{t}{\infty}t\supnormP{Y_t}=0\qquad\text{ and }\qquad\limit{t}{\infty}t\supnormP{Y_t}^2=0.\]
Hence, by (\ref{LTYTinq}) and the finiteness of $\mu$ and $\psi''(0)$ we conclude that 
\[\limit{T}{\infty}\ktE\bigl[(L_tY_t)^2\bigr]=0.\]
When $\mu=0$, we get $\int_0^\infty \Arrowvert Y_t\Arrowvert_{L^\infty(\mathbb{P})}^2\, dt<\infty$ directly 
as a condition of admissible strategies. Therefore, the same result follows. 

\item[(ii)]
Using Cauchy-Schwarz inequality and It\^o isometry we obtain 
\begin{align*}
&\ktE\biggl[\,\biggl|\,\intl{0}{T}Y_{t-}\,dL_t\,\biggr|\,\biggr]\cr
\leq\,&|\mu|\ktE\biggl[\,\biggl|\,\intl{0}{T}Y_{t-}\,dt\,\biggr|\,\biggr]
+\ktE\biggl[\,\biggl|\intl{0}{T}Y_{t-}d\biggl(\sigma W_t+\int_{\ktR}x\,\Big(N(t,dx)-t\nu(dx)\Big)\biggr)\biggr|\,\biggr]\cr
\leq\,&|\mu|\intl{0}{T}\|Y_{t}\|_{L^\infty(\ktP)}\,dt
+\ktE\biggl[\,\biggl|\intl{0}{T}Y_{t-}d\biggl(\sigma W_t
+\int_{\ktR}x\,\Big(N(t,dx)-t\nu(dx)\Big)\biggr)\biggr|^2\,\biggr]^{\frac{1}{2}}\cr
=\,&|\mu|\intl{0}{T}\|Y_{t}\|_{L^\infty(\ktP)}\,dt
+\biggl(\sigma^2+\int_{\ktR\backslash\{0\}}x^2\nu(dx)\biggr)^{\frac{1}{2}}
\ktE\biggl[\intl{0}{T}Y_t^2\,dt\biggr]^{\frac{1}{2}}\cr
\leq\,&|\mu|\intl{0}{T}\|Y_{t}\|_{L^\infty(\ktP)}\,dt
+\biggl(\sigma^2+\int_{\ktR\backslash\{0\}}x^2\nu(dx)\biggr)^{\frac{1}{2}}
\ktE\biggl[\intl{0}{T}\|Y_{t}\|_{L^\infty(\ktP)}^2\,dt\biggr]^{\frac{1}{2}}\cr
\end{align*}
From the existence of the first and second moments of $L_1$, we know that $\mu$, $\sigma$ and 
$\int_{\ktR\backslash\{0\}}x^2\nu(dx)$ are all finite. The result then follows from the integrability 
conditions in (\ref{Yintegrable}) and (\ref{Y^2integrable}) of an admissible strategy. 
\end{itemize}
\end{proof}

\begin{proof}[\textbf{Proof of Lemma \ref{propkappaprop}}]
\mbox{}
\begin{itemize}

\item[(i)]Let $\psi(u)$ be given by the L\'{e}vy-Khintchine representation of $L$. 
Then for all $u\in[0,\bar{\delta}_A)$, we have
\begin{align}
\kappa_A(u)=\psi(\mathrm{i}Au)
                  =-A\mu u+\frac{1}{2}A^2u^2\sigma^2
                  +\int_{\ktR}\Bigl(\e^{-Aux}-1+Aux\Bigr)\,\nu(dx). \label{L-K}
\end{align}
Therefore, $\kappa_A(0)=0$ follows directly.

\item[(ii)] Observe that $-A\mu u$, $\frac{1}{2}A^2u^2\sigma^2$ and $\e^{-Aux}-1+Aux$ are all convex in $u$, and 
in particular that $\frac{1}{2}A^2u^2\sigma^2$ and $\e^{-Aux}-1+Aux$ are strictly convex in $u$. Thus, with reference 
to (\ref{L-K}), the strict convexity of $\kappa_A$ can be concluded from the assumption that $L$ is non-trivial.


\item[(iii)]Let $\mu=0$. In view of (\ref{L-K}), in order to proof 
$\lim_{x\rightarrow 0}\frac{\kappa_A(x)}{x^2}=K>0$, it suffices to show that 
\[\lim_{u\rightarrow 0}\int_{\ktR}\biggl(\frac{e^{-Aux}-1+Aux}{A^2u^2}\biggr)\,\nu(dx)=K',\]
for some constant $K'>0$. Let $0<A\bar{u}<\bar{\delta}_A$. It can be checked that for all $u\in(0,\bar{u})$, 
\[\biggl|\frac{e^{-Aux}-1+Aux}{A^2u^2}\biggr|<\frac{x^2}{2},\quad\text{ if }x>0,\]
and 
\[\biggl|\frac{e^{-Aux}-1+Aux}{A^2u^2}\biggr|<\frac{e^{-A\bar{u}x}-1+A\bar{u}x}{A^2\bar{u}^2},\quad\text{ if }x<0.\]
Because of the finite second moment of $L_1$ and the fact that $\kappa_A(\bar{u})<\infty$, both $\frac{x^2}{2}$ and $\frac{e^{-A\bar{u}x}-1+A\bar{u}x}{A^2\bar{u}^2}$ are $\nu$-integrable. 
Thus, by the dominated convergence theorem, it follows that 
\begin{align*}
\lim_{u\rightarrow 0}\int_{\ktR}\biggl(\frac{e^{-Aux}-1+Aux}{A^2u^2}\biggr)\,\nu(dx)
=\int_{\ktR}\frac{x^2}{2}\,\nu(dx)=K',
\end{align*}
where $K'$ is some strictly positive constant.

\item[(iv)]Let $\mu\neq 0$. Then $\lim_{x\rightarrow 0}\frac{\kappa_A(x)}{x}=-A\mu$ follows from (\ref{L-K})
as well as (iii). 
\end{itemize}
\end{proof}

\begin{proof}[\textbf{Proof of Lemma \ref{corr2}}]
Let $\mu=0$. Then Lemma \ref{propkappaprop} (iii) implies that there exists strictly positive constants 
$\bar{x}$, $C_1$ and $C_2$ such that  $C_1x^2<\kappa_A(x)< C_2x^2$,
for all $x\in(0,\bar{x})$.
Suppose that $\int_0^\infty \Arrowvert Y_t\Arrowvert_{L^\infty(\mathbb{P})}^2\, dt<\infty$. 
Then $Y_t$ tends to zero as $t$ tends to infinity. Hence, there exists $s>0$, 
such that $\supnormP{Y_t}\in(0,\bar{x})$, for all $t>s$. Then 
\begin{gather}\label{tmp}
C_1\int_s^\infty\supnormP{Y_t}^2\,dt<\int_s^\infty\kappa_A\bigl(\supnormP{Y_t}\bigr)\,dt
<C_2\int_s^\infty\supnormP{Y_t}^2\,dt,
\end{gather}
from which it follows that 
$\intl{s}{\infty}\kappa_A\bigl(\Arrowvert Y_u\Arrowvert_{L^\infty(\mathbb{P})}\bigr)\,du<\infty$.
Since $\supnormP{Y_t}$ is bounded for $t\in[0,s]$, 
we have $\intl{0}{s}\kappa_A\bigl(\Arrowvert Y_u\Arrowvert_{L^\infty(\mathbb{P})}\bigr)\,du<\infty$. 
A similar argument together with the inequality (\ref{tmp}) also establishes the reverse implication. 
The proofs regarding the cases of $\mu <0$ and $\mu>0$ are  similar to above. 
\end{proof}

\begin{proof}[\textbf{Proof of Lemma \ref{propui}}]
By It\^{o}'s formula and using the expression of $\kappa_A$ in (\ref{L-K}) we calculate that
\begin{align*} 
M^Y_t=&1-\int_0^t M^Y_{u-}AY_{u-}\Bigl(\bigl(\mu-\int_{\ktR}x\,\nu(dx)\bigr)\, du+\sigma\, dW_u\Bigr)\cr
    &\qquad-\int_0^t M^Y_{u-}\Bigl(\kappa_A\bigl(Y_{u-}\bigr)-\frac{1}{2}A^2Y_{u-}^2\sigma^2\Bigr)\,du \notag\\
    &\qquad+\int_0^t\int_{\ktR}M^Y_{u-}\Bigl(e^{-AY_{u-}x}-1\Bigr)
    \,\Bigl(\big(N(du,dx)-\nu(dx)du\big)+\nu(dx)du\Bigr)\cr 
   =&1-\int_0^t M^Y_{u-}AY_{u-}\sigma\, dW_u
    +\int_0^t\int_{\ktR}M_{u-}\Bigl(e^{-AY_{u-}x}-1\Bigr)\,\big(N(du,dx)-\nu(dx)du\big), 
\end{align*}
which shows $M$ is a local martingale. Define 
\begin{align*}
X_t=\intl{0}{t}-AY_{u-}\,d\tilde{L}_u \qquad\text{ and }\qquad
K(\theta)_t=\intl{0}{t}\tilde{\kappa}_A(\theta Y_{u})\,du, 
\end{align*}
where $\theta\in[0,1]$, $Y\in\mathcal{A}(y)$ with $y\in[0,\bar{\delta}_A)$, $\tilde{L}$ is the martingale 
part of $L$ and $\tilde{\kappa}_A$ is equal to $\kappa_A$ with $\mu=0$. It can be checked that 
the process $M^Y$ in (\ref{M}) can be rewritten as 
\begin{gather*}
M^Y=\exp\bigl(X-K(1)\bigr). 
\end{gather*}
With reference to Definition 3.1 and Theorem 3.2 in \citet{KS}, in order to show $M^Y$ is a uniformly 
integrable martingale, it is sufficent to check that 
\begin{gather}
\lim_{\delta\downarrow 0}\sup_{t\in\ktR_+}\delta\log
\Biggl(\ktE\biggl[\exp\biggl(\frac{1}{\delta}\Bigl((1-\delta)K(1)_t
-K(1-\delta)_t\Bigr)\biggr)\bigg]\Biggr)=0 ,  \label{uicdt}
\end{gather}
for $\delta\in(0,1)$.
Observe that 
\begin{align}
&\,\lim_{\delta\downarrow 0}\sup_{t\in\ktR_+}\delta\log
\Biggl(\ktE\biggl[\exp\biggl(\frac{1}{\delta}\Bigl((1-\delta)K(1)_t
-K(1-\delta)_t\Bigr)\biggr)\bigg]\Biggr)\notag\\
   \leq&\,\lim_{\delta\downarrow 0}\sup_{t\in\ktR_+}\delta\log
   \Biggl(\exp\Biggl(\biggl{\|}\frac{1}{\delta}\Bigl((1-\delta)K(1)_t
   -K(1-\delta)_t\Bigr)\biggr{\|}_{L^\infty (\ktP)}\Biggr)\Biggr)\notag\\
   =&\,\lim_{\delta\downarrow 0}\sup_{t\in\ktR_+}\Bigl{\|}(1-\delta)K(1)_t
   -K(1-\delta)_t\Bigr{\|}_{L^\infty (\ktP)}\notag\\
   =&\,\lim_{\delta\downarrow 0}\sup_{t\in\ktR_+}\biggl{\|}(1-\delta)\intl{0}{t}\tilde{\kappa}_A(Y_{u})\,du
      -\intl{0}{t}\tilde{\kappa}_A\bigl((1-\delta)Y_{u}\bigr)\,du\,\biggr{\|}_{L^\infty (\ktP)}\notag\\
   \leq&\,\lim_{\delta\downarrow 0}\sup_{t\in\ktR_+}\intl{0}{t}\Bigl{\|}(1-\delta)\tilde{\kappa}_A(Y_{u})
         -\tilde{\kappa}_A\bigl((1-\delta)Y_{u}\bigr)\Bigr{\|}_{L^\infty (\ktP)}\,du\notag\\
   \leq&\,\lim_{\delta\downarrow 0}\intl{0}{\infty}\Bigl{\|}(1-\delta)\tilde{\kappa}_A(Y_{u})
         -\tilde{\kappa}_A\bigl((1-\delta)Y_{u}\bigr)\Bigr{\|}_{L^\infty (\ktP)}\,du.  \label{useDCT}
\end{align}
For $\delta\in(0,1)$, we have that
\begin{align*}
&\,\bigl{\|}(1-\delta)\tilde{\kappa}_A(Y_{u})-\tilde{\kappa}_A\bigl((1-\delta)Y_{u}\bigr)\bigr{\|}_{L^\infty (\ktP)}\\
\leq&\,\bigl{\|}(1-\delta)\tilde{\kappa}_A(Y_{u})\bigr{\|}_{L^\infty (\ktP)}
     +\bigl{\|}\tilde{\kappa}_A\bigl((1-\delta)Y_{u}\bigr)\bigr{\|}_{L^\infty (\ktP)}\\
   =&\,(1-\delta)\tilde{\kappa}_A\bigl(\supnormP{Y_{u}}\bigr)+\tilde{\kappa}_A\bigl((1-\delta)\supnormP{Y_{u}}\bigr)\\
   \leq&\,2\tilde{\kappa}_A\bigl(\supnormP{Y_{u}}\bigr).
\end{align*}
The last two steps are because $\tilde{\kappa}_A(x)$ is positive and 
non-decreasing for $x\geq 0$, which follow from Lemma  \ref{propkappaprop} (i), (ii) and (iii). 
According to (\ref{Yintegrable}) or (\ref{Y^2integrable}) as well as Lemma \ref{corr2}, we have
\[\intl{0}{\infty}\tilde{\kappa}_A\bigl(\supnormP{Y_t}\bigr)dt<\infty.\]
Then, by the dominated convergence theorem, (\ref{useDCT}) gives
\begin{equation}
\begin{split}
&\,\lim_{\delta\downarrow 0}\sup_{t\in\ktR_+}\delta\log
\Biggl(\ktE\biggl[\exp\biggl(\frac{1}{\delta}\Bigl((1-\delta)K(1)_t
-K(1-\delta)_t\Bigr)\biggr)\bigg]\Biggr)\\
   \leq&\,\intl{0}{\infty}\lim_{\delta\downarrow 0}\Bigl{\|}(1-\delta)\tilde{\kappa}_A(Y_{u})
            -\tilde{\kappa}_A\bigl((1-\delta)Y_{u}\bigr)\Bigr{\|}_{L^\infty (\ktP)}\,du\\
    =&\,0.  \label{LHSleq0}
\end{split}
\end{equation}
On the other hand, the convexity of $\tilde{\kappa}_A(x)$ and $\tilde{\kappa}_A(0)=0$ imply
\[(1-\delta)\tilde{\kappa}_A(x)\geq\tilde{\kappa}_A\bigl((1-\delta)x\bigr),\quad\text{ for }\delta\in(0,1),\]
hence, 
\[(1-\delta)K(1)_t-K(1-\delta)_t\geq 0.\]
Combining this with (\ref{LHSleq0}), we get (\ref{uicdt}). 
\end{proof}

The next lemma is used in the proofs of Proposition \ref{propsolntoHJB} and Theorem \ref{thmvarify}. 

\begin{lem} \label{lem_x/G(x)}
Let the function $F$ satisfy Assumption \ref{assumF}. Then $x\mapsto\frac{x}{G(x)}$ is continuous on 
$[0,\infty)$, where $G:[0,\infty)\rightarrow[0,\infty)$ is the inverse function of $x\mapsto x^2F'(x)$. 
\end{lem}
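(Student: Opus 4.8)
The plan is to first establish that $G$ is a well-defined, continuous, strictly increasing function with $G(0)=0$, and then to reduce the quotient $x/G(x)$ to a composition that is manifestly continuous up to the origin.

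First I would record the properties of $\phi(x):=x^2F'(x)$. By Assumption \ref{assumF}(i) the derivative $F'$ is continuous on $(0,\infty)$, so $\phi$ is continuous there; by Assumption \ref{assumF}(iv) it is strictly increasing with $\phi(x)\to\infty$ as $x\to\infty$; and by Lemma \ref{prop3} we have $\lim_{x\rightarrow 0}\phi(x)=0$. Extending $\phi$ by $\phi(0)=0$, it is therefore a continuous, strictly increasing bijection of $[0,\infty)$ onto $[0,\infty)$, whence its inverse $G$ is continuous, strictly increasing and satisfies $G(0)=0$. This is exactly the statement already quoted in the text preceding Proposition \ref{propsolntoHJB}, so it may simply be invoked.

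Next, on the open half-line $(0,\infty)$ the quotient $x/G(x)$ is a ratio of continuous functions whose denominator is strictly positive (since $G$ is strictly increasing with $G(0)=0$), so it is continuous there, and the only point requiring attention is $x=0$, where the quotient has the indeterminate form $0/0$. To handle it I would use the defining relation of $G$: for $x>0$, writing $y=G(x)$ we have $x=\phi(y)=y^2F'(y)$, so that
\[
\frac{x}{G(x)}=\frac{y^2F'(y)}{y}=yF'(y)=G(x)\,F'\bigl(G(x)\bigr).
\]
Thus on $(0,\infty)$ the map $x\mapsto x/G(x)$ coincides with the composition $x\mapsto G(x)F'\bigl(G(x)\bigr)$ of the continuous function $G$ with the map $u\mapsto uF'(u)$.

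Finally, Lemma \ref{prop3} gives $\lim_{u\rightarrow 0}uF'(u)=0$, so $u\mapsto uF'(u)$ extends continuously to $[0,\infty)$ with value $0$ at the origin; composing with $G$, which is continuous and satisfies $G(0)=0$, shows that $x\mapsto G(x)F'\bigl(G(x)\bigr)$ is continuous on all of $[0,\infty)$ and takes the value $0$ at $x=0$. Defining $x/G(x)$ at $x=0$ by this common limit then yields continuity on $[0,\infty)$. I do not expect any genuine obstacle beyond the bookkeeping at the origin; the one nontrivial idea is the substitution $y=G(x)$, which rewrites the indeterminate quotient as $uF'(u)\circ G$ and thereby reduces the claim directly to the limit already established in Lemma \ref{prop3}.
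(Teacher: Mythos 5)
Your proof is correct and follows essentially the same route as the paper: the key step in both is the substitution $x=u^2F'(u)$ (equivalently $u=G(x)$), which rewrites $x/G(x)$ as $uF'(u)$ and reduces the behaviour at the origin to the limit $\lim_{u\rightarrow 0}uF'(u)=0$ from Lemma \ref{prop3}. Your write-up is in fact slightly more explicit than the paper's, since you identify the limiting value $0$ and state the continuous extension at $x=0$, whereas the paper only checks finiteness of the limit.
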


\begin{proof}
Assumption \ref{assumF} and Lemma \ref{prop3} imply that $G$ is continuous and $G(0)=0$. 
Therefore, it is sufficient to check that $\lim_{x\rightarrow 0}\frac{x}{G(x)}<\infty$. Let $x=u^2F'(u)$. 
Then it follows that $\frac{x}{G(x)}=uF'(u)$. Hence, the result follows from the fact that 
$u\rightarrow 0$, as $x\rightarrow 0$, and $\lim_{u\rightarrow 0}uF'(u)=0$ (see Lemma \ref{prop3}). 
\end{proof}

\begin{proof}[\textbf{Proof of Proposition \ref{propsolntoHJB}}]
We first show that the function $v$ given by (\ref{v(y)})
is continuously differentiable, and note that it is sufficient to show that 
$v'(y)=\frac{\kappa_A(y)}{G\bigl(\frac{\kappa_A(y)}{A}\bigr)}
+AF\bigl(G\bigl(\frac{\kappa_A(y)}{A}\bigr)\big)$ is continuous on $[0,\bar{\delta}_A)$.
This is the case if $x\mapsto\frac{x}{G(x)}$ is continuous for $x\geq 0$. 
But this is demonstrated by Lemma \ref{lem_x/G(x)}. 

Recall that the Hamilton-Jacobi-Bellman equation in our problem is 
\begin{gather*}
\kappa_A(y)+\inf_{x\geq 0}\bigl\{AxF(x)-x v'(y)\bigr\}=0. 
\end{gather*}
In order to prove that $v$ in (\ref{v(y)}) is a solution to this equation, 
because $AxF(x)-xv'(y)$ is strictly convex in $x$, it is enough to show that for all 
$y\in[0,\bar{\delta}_A)$, there exists $x^*\geq 0$ such that 
\begin{gather} \label{x*_cdt1}
Ax^* F'(x^*)+AF(x^*)-v'(y)=0
\end{gather}
and 
\begin{gather} \label{x*cdt2}
\kappa_A(y)+Ax^* F(x^*)-x^* v'(y)=0, 
\end{gather}
where the equality in (\ref{x*_cdt1}) comes from the first-order condition of optimality of the expression 
$AxF(x)-xv'(y)$. But with $v'(y)=\frac{\kappa_A(y)}{G\bigl(\frac{\kappa_A(y)}{A}\bigr)}
+AF\bigl(G\bigl(\frac{\kappa_A(y)}{A}\bigr)\big)$, it can be checked that $x^*=G\bigl(\frac{\kappa_A(y)}{A}\bigr)$ 
satisfies both (\ref{x*_cdt1}) and (\ref{x*cdt2}). The boundary condition $v(0)=0$ is a consequence of the 
expression of $v(y)$ and the continuity of $v(y)$ at $y=0$. 
\end{proof}

\begin{proof}[\textbf{Proof of Theorem \ref{thmvarify}}]
We know that when $t\leq\tau$, 
\begin{gather*} 
\intl{Y^*_t}{y}\frac{1}{G\bigl(\frac{\kappa_A(u)}{A}\bigr)}\,du=t, 
\end{gather*}
from which it follows that 
\begin{gather*}
\xi^*_t=-\frac{dY^*_t}{dt}=G\biggl(\frac{\kappa_A(Y^*_t)}{A}\biggr), \qquad t\leq\tau. 
\end{gather*}
On the other hand, when $t>\tau$, $Y^*_t=0$. Hence, 
\begin{gather*}
\xi^*_t=0=G\biggl(\frac{\kappa_A(Y^*_t)}{A}\biggr), \qquad t>\tau. 
\end{gather*}

We next prove that $Y^*\in\mathcal{A}_D(y)$. It is clear that $Y^*$ is deterministic and absolutely 
continuous. The non-negativity of $G$ implies that 
$Y^*$ is non-increasing. It remains to show that if $\mu<0$, then $\int_0^\infty Y^*_t\, dt<\infty$; 
and if $\mu=0$, then $\int_0^\infty \bigl(Y^*_t\bigr)^2\, dt<\infty$. However, with reference to Lemma 
\ref{corr2}, it is enough to check that
\[\int_0^\infty \kappa_A\bigl(Y^*_t\bigr)\, dt=\int_0^\tau \kappa_A\bigl(Y^*_t\bigr)\, dt<\infty.\]
By a change of variable, we have that 
\[\int_0^\tau \kappa_A\bigl(Y^*_t\bigr)\, dt=
\int_y^0 -\frac{\kappa_A\bigl(Y^*_t\bigr)}{G\Bigl(\frac{\kappa_A(Y^*_t)}{A}\Bigr)}\,dY^*_t<\infty,\]
where the finiteness is du to the continuity of the integrand on the compact interval $[0,y]$, which is 
implied by Lemma \ref{lem_x/G(x)}. 

With reference to (\ref{x*_cdt1}) and (\ref{x*cdt2}), the function $v$ in (\ref{v(y)}) satisfies 
\begin{gather} \label{ineqHJB}
\kappa_A(y)+A\xi F(\xi)-\xi v'(y)\geq 0, \qquad\text{ for all }\xi\geq 0, 
\end{gather}
and equality holds only when $\xi=G\bigl(\frac{\kappa_A(y)}{A}\bigr)$. Let $Y\in\mathcal{A}_D(y)$. 
Observe that 
\[v(Y_T)=v(y)-\int^T_0v'(Y_t)\xi_t\, dt.\]
Taking $T$ to infinity and using the boundary condition $v(0)=0$, it follows that
\[v(y)=\int^\infty_0v'(Y_t)\xi_t\, dt.\]
Then by (\ref{ineqHJB}) we have 
\begin{gather}
v(y)\leq\int_0^\infty\Big(\kappa_A(Y_t)+A\xi_tF(\xi_t)\Big)\,dt. \label{vineq}
\end{gather}
Now consider the strategy $Y^*$ in (\ref{opt_strategy}), which has a speed process $\xi^*$ satisfying 
$\xi^*_t=G\bigl(\frac{\kappa_A(Y^*_t)}{A}\bigr)$, for all $t\geq 0$. Then, 
\begin{gather*} 
\kappa_A\bigl(Y^*_t\bigr)+A\xi^*_t F\bigl(\xi^*_t\bigr)-\xi^*_t v'\bigl(Y^*_t\bigr)=0, \qquad t\geq 0, 
\end{gather*}
hence, 
\begin{gather*}
v(y)=\int_0^\infty\Big(\kappa_A\bigl(Y^*_t\bigr)+A\xi_tF\bigl(\xi^*_t\bigr)\Big)\,dt. 
\end{gather*}
This together with (\ref{vineq}) implies that $V(y)=v(y)$, for all $y\in[0,\bar{\delta}_A)$. Therefore, with reference to the analysis after equation (\ref{reducedtoAD}), we get that $Y^*$ 
is the unique optimal strategy to problem (\ref{originalprob}). 
\end{proof}

\begin{proof}[\textbf{Proof of Proposition \ref{proptau}}]
\mbox{}
\begin{itemize}
\item[(i)]Suppose $\mu<0$ and  let $p<1$ be such that $\lim_{x\rightarrow 0}x^pF'(x)=K$, with $K$ being some strictly positive constant. 
Write $u=x^2F'(x)$. Then we have 
\[\frac{u^{\frac{1}{2-p}}}{G(u)}=\bigl(x^pF'(x)\bigr)^{\frac{1}{2-p}}.\]
By letting $x$ tend to $0$, so $u$ tends to $0$ as well, it follows that 
\begin{gather} \label{u^(1/2-p)/G(u)}
\lim_{u\rightarrow 0}\frac{u^{\frac{1}{2-p}}}{G(u)}=K^{\frac{1}{2-p}}.
\end{gather}
Lemma \ref{propkappaprop} (iv) together with (\ref{u^(1/2-p)/G(u)}) gives 
\begin{gather*}
\lim_{x\rightarrow 0}\frac{x^{\frac{1}{2-p}}}{G\bigl(\frac{\kappa_A(x)}{A}\bigr)}=K', 
\end{gather*}
for some other constant $K'>0$. 
Therefore, there exist strictly positive constants $K_1$, $K_2$ and $\bar{x}$ such that 
for all $x\in(0,\bar{x})$, 
\[\frac{K_1}{x^{\frac{1}{2-p}}}<\frac{1}{G\bigl(\frac{\kappa_A(x)}{A}\bigr)}<\frac{K_2}{x^{\frac{1}{2-p}}}.\]
Integrating and taking limit on each term gives 
\[\lim_{x\rightarrow 0}\int_x^{\bar{x}}\frac{K_1}{u^{\frac{1}{2-p}}}\,du
\leq\lim_{x\rightarrow 0}\int_x^{\bar{x}}\frac{1}{G\bigl(\frac{\kappa_A(u)}{A}\bigr)}\,du
\leq\lim_{x\rightarrow 0}\int_x^{\bar{x}}\frac{K_2}{u^{\frac{1}{2-p}}}\,du.\]
Observe that $p<1$ implies $\frac{1}{2-p}<1$, and therefore 
$\int_0^{\bar{x}}\frac{1}{u^{\frac{1}{2-p}}}\,du<\infty$. Hence, 
\begin{gather*}
\lim_{x\rightarrow 0}\int_x^{\bar{x}}\frac{1}{G\bigl(\frac{\kappa_A(u)}{A}\bigr)}\,du<\infty. 
\end{gather*}
Then the required result follows from (\ref{tau}) and the fact that 
$\int_{\bar{x}}^y\frac{1}{G\bigl(\frac{\kappa_A(u)}{A}\bigr)}\,du<\infty$, 
if the initial stock position $y>\bar{x}$.

\item[(ii)]Suppose $\mu=0$. Observe that (\ref{u^(1/2-p)/G(u)}) implies 
\[\lim_{x\rightarrow 0}\frac{x^{\frac{2}{2-p}}}{G(x^2)}=C,\]
for some constant $C>0$.  
Combining this with Lemma \ref{propkappaprop} (iii), we obtain 
\begin{gather*}
\lim_{x\rightarrow 0}\frac{x^{\frac{2}{2-p}}}{G\bigl(\frac{\kappa_A(x)}{A}\bigr)}=C', 
\end{gather*}
for some other constant $C'>0$. 
Then there exist strictly positive constants $C_1$, $C_2$ and $\bar{x}$ such that 
for all $x\in(0,\bar{x})$, 
\[\frac{C_1}{x^{\frac{2}{2-p}}}<\frac{1}{G\bigl(\frac{\kappa_A(x)}{A}\bigr)}<\frac{C_2}{x^{\frac{2}{2-p}}}.\]
Therefore, 
\[\lim_{x\rightarrow 0}\int_x^{\bar{x}}\frac{C_1}{u^{\frac{2}{2-p}}}\,du
\leq\lim_{x\rightarrow 0}\int_x^{\bar{x}}\frac{1}{G\bigl(\frac{\kappa_A(u)}{A}\bigr)}\,du
\leq\lim_{x\rightarrow 0}\int_x^{\bar{x}}\frac{C_2}{u^{\frac{2}{2-p}}}\,du.\]
If $p<0$, then $\frac{2}{2-p}<1$. Hence $\tau<\infty$ is obtained by the same argument as in (i) of this proof. 
If $p\in[0,1)$, then $\frac{2}{2-p}\geq 1$. It follows that 
$\int_0^{\bar{x}}\frac{1}{G\bigl(\frac{\kappa_A(u)}{A}\bigr)}\,du=\infty$, and therefore $\tau=\infty$. 
\end{itemize}
\end{proof}

\begin{proof}[\textbf{Proof of Proposition \ref{thmnuL}}]
We show that $\hat{L}$ given by 
\begin{align}
\hat{L}_t
&=\tilde{m}t+\tilde{\sigma}\tilde{W}_t
+\int_0^t\int_{\ktR}\bigl(\mathrm{e}^z-1\bigr)
\,\bigl(\tilde{N}(dt,dz)-\tilde{\nu}(dz)dt\bigr), \qquad t\geq 0,  \label{Lhat_barN-barnu}
\end{align}
is a L\'evy process. Define a random measure 
$\hat{N}:\Omega\times\mathcal{B}\bigl([0,\infty)\bigr)\otimes\mathcal{B}(\ktR)\rightarrow\mathbb{Z}_+$ 
and a measure $\hat{\nu}:\mathcal{B}(\ktR)\rightarrow\mathbb{Z}_+$ to be such that if 
$B\in\mathcal{B}(\ktR)$ and $B\cap(-1,\infty)\neq \emptyset$, then 
\begin{align}
\hat{N}(\omega,A,B)&=\tilde{N}\Bigl(A\,,\,\ln\bigl(B\cap(-1,\infty)+1\bigr)\Bigr)(\omega), \cr
\hat{\nu}(B)&=\tilde{\nu}\Bigl(\ln\bigl(B\cap(-1,\infty)+1\bigr)\Bigr); \label{def_nuhat}
\end{align}
otherwise, they are both equal 0, where $\mathbb{Z}_+$ is the set of all positive integers 
and $\ln(B\cap(-1,\infty)+1)=\{\ln(x+1)\,|\,x\in B\cap(-1,\infty)\}$ (\,we have for all 
$A\in\mathcal{B}([0,\infty))$ and $\omega\in\Omega$, $\tilde{N}(A,\{0\})(\omega)=\tilde{\nu}(\{0\})=0$\,). 
Write $\hat{N}(\cdot,\cdot)=\hat{N}(\omega,\cdot,\cdot)$. Then by writing $x=\mathrm{e}^z-1$, 
it follows from (\ref{Lhat_barN-barnu}) that 
\begin{gather} \label{Lhat_Nhat-nuhat}
\hat{L}_t=\tilde{m}t+\tilde{\sigma}\tilde{W}_t
+\int_0^t\int_{\ktR}x\,\bigl(\hat{N}(dt,dx)-\hat{\nu}(dx)dt\bigr), \qquad t\geq 0. 
\end{gather}
With reference to \citet{Kall} Corollary 15.7, to prove $\hat{L}$ is a L\'evy process, 
it suffices to show that for any $B\in\mathcal{B}(\ktR)$, $\bigl(\hat{N}(t,B)\bigr)_{t\geq 0}$ 
is a Poisson process with intensity $\hat{\nu}(B)$ satisfying 
\begin{gather} \label{x^2wedge1}
\int_{\ktR} \bigl(x^2\wedge 1\bigr)\,\hat{\nu}(dx)<\infty. 
\end{gather}
But from the definition of $\hat{N}$, it is clear that $\bigl(\hat{N}(t,B)\bigr)_{t\geq 0}$ is 
a Poisson process. Observe that 
\begin{gather*}
\ktE[\hat{N}(t,B)]=\ktE\bigl[\tilde{N}\bigl(t,\ln(B\cap(-1,\infty)+1)\bigr)\bigr]=t\tilde{\nu}\bigl(\ln(B\cap(-1,\infty)+1)\bigr)=t\hat{\nu}(B), 
\end{gather*}
which proves that $\hat{\nu}(B)$ is the intensity of $\bigl(\hat{N}(t,B)\bigr)_{t\geq 0}$. 
From the Taylor expansion of $(\mathrm{e}^z-1)^2$, it can be shown that there exist constants 
$\bar{z}>0$ and $C>0$ such that for all $z\in(-\bar{z},\bar{z})$, 
\[\bigl(\mathrm{e}^z-1\bigr)^2\leq Cz^2.\]
For $\epsilon\in(0,1)$, consider interval $\mathcal{S}=\bigl(\,\ln(1-\epsilon)\,,\,\ln(\epsilon+1)\,\bigr)$. 
Then using (\ref{def_nuhat}) we calculate that for $\epsilon$ close enough to 0 so that 
$\mathcal{S}\subseteq(-\bar{z},\bar{z})$, we have 
\begin{gather} \label{epsilonx^2nudx}
\int_{(-\epsilon,\epsilon)} x^2\,\hat{\nu}(dx)
=\int_{\mathcal{S}}\bigl(\mathrm{e}^z-1\bigr)^2\,\tilde{\nu}(dz)
\leq C\int_{\mathcal{S}}z^2\,\tilde{\nu}(dz)
\leq C\int_{(-\bar{z},\bar{z})}z^2\,\tilde{\nu}(dz)<\infty, 
\end{gather}
where the finiteness follows since  $\tilde{\nu}$ is a L\'evy measure. 
Again by (\ref{def_nuhat}), we obtain 
\begin{gather*}
\int_{\ktR\setminus(-\epsilon\,,\,\epsilon)}\,\hat{\nu}(dx)
=\int_{\ktR\setminus\mathcal{S}}\,\tilde{\nu}(dz)<\infty, 
\end{gather*}
where the finiteness again follows since $\tilde{\nu}$ is a L\'evy measure . This implies that 
$\hat{\nu}\bigl(\ktR\setminus(-1,1)\bigr)<\infty$ and $\hat{\nu}\bigl((-1,-\epsilon]\cup[\epsilon,1)\bigr)<\infty$. 
Since $x^2$ is bounded on $(-1,-\epsilon]\cup[\epsilon,1)$, together with (\ref{epsilonx^2nudx}), we get 
\begin{gather*}
\int_{(-1,1)} x^2\,\hat{\nu}(dx)<\infty. 
\end{gather*}
Combining this with $\hat{\nu}\bigl(\ktR\setminus(-1,1)\bigr)<\infty$, we get (\ref{x^2wedge1}). We therefore conclude that $\hat{N}$ and $\hat{\nu}$ are Poisson random measure and 
L\'evy measure associated with the L\'evy process $\hat{L}$, respectively. Moreover, we calculate from (\ref{def_nuhat}) 
that for $x>-1$ and $x\neq 0$, 
\begin{gather*}
\hat{\nu}(dx)=\tilde{\nu}\bigl(\,d\bigl(\ln(x+1)\bigr)\,\bigr)
=\tilde{f}\bigl(\ln(x+1)\bigr)\,d\bigl(\ln(x+1)\bigr)=\frac{1}{x+1}\tilde{f}\bigl(\ln(x+1)\bigr)dx. 
\end{gather*}

The relation $L=\tilde{s}\hat{L}$ shows that 
$L$ is also a L\'evy process. The expression of $L$ in (\ref{candidate}) shows the adaptedness. 
Now we check Assumption \ref{assum1} is satisfied by $L$, but it suffices to check for $\hat{L}$. 
According to Assumption \ref{assum3}, 
we know $\int_{|z|\geq 1}\mathrm{e}^{2z}\,\tilde{\nu}(dz)<\infty$, and since for any $\epsilon>0$, 
$\tilde{\nu}\bigl(\ktR\setminus(-\epsilon,\epsilon)\bigr)<\infty$, it follows that on $[\ln 2,\infty)$, 
$\e^{2z}$ and $\e^z$ are both $\tilde{\nu}$-integrable and $\tilde{\nu}\bigl([\ln 2,\infty)\bigr)<\infty$. 
Therefore, 
\[\int_{|x|\geq 1}x^2\,\hat{\nu}(dx)=\int_{[\ln 2,\infty)}\bigl(\mathrm{e}^z-1\bigr)^2\,\tilde{\nu}(dz)<\infty,\]
which implies that $\hat{L}_1$ has finite second moment \citep[see e.g.][Theorem 3.8]{Kypr}. Observe that when $u\leq 0$, we have 
\[\exp\bigl(u(\mathrm{e}^z-1)\bigr)\leq 1,\quad\text{ for all }z\geq 0.\] 
Hence, 
\begin{gather}
\int_{|x|\geq 1}\mathrm{e}^{ux}\,\hat{\nu}(dx)
=\int_{[\ln 2,\infty)}\exp\bigl(u(\mathrm{e}^z-1)\bigr)\,\tilde{\nu}(dz)<\infty, \label{forremarkinsec5}
\end{gather}
from which it follows that $\ktE[\e^{u\hat{L}_1}]<\infty$, for all $u\leq 0$. 
\end{proof}

\begin{proof}[\textbf{Proof of Theorem \ref{Thm_appro}}]
This is a direct consequence of Theorem \ref{thmvarify}. 
\end{proof}

\begin{proof}[\textbf{Proof of Proposition \ref{prop_lowerbd}}]
For $u\geq 0$, we calculate that 
\begin{align} \label{proof_prop_lowerbd1}
&\int_{-1}^0 \Bigl(\mathrm{e}^{-\tilde{A}ux}-1+\tilde{A}ux\Bigr)\,\hat{\nu}(dx)\notag\\
=\,\,&\int_{-1}^0 \Bigl(\mathrm{e}^{-\tilde{A}ux}-1+\tilde{A}ux\Bigr)\frac{-1}{\eta\ln(x+1)}(x+1)^{C+D-1}\,dx\notag\\
\geq\,\,&\frac{\e}{\eta}\int_{-1}^0 \Bigl(\mathrm{e}^{-\tilde{A}ux}-1+\tilde{A}ux\Bigr)(x+1)^{C+D}\,dx\notag\\
=\,\,&\frac{\e}{\eta}\int_{0}^1 \Bigl(\mathrm{e}^{-\tilde{A}u(x-1)}x^{C+D}\Bigr)\,dx
+\frac{\e}{\eta}\int_{0}^1 \Bigl(\tilde{A}ux^{C+D+1}\Bigr)\,dx
+\frac{\e}{\eta}\int_{0}^1 \Bigl(-(1+\tilde{A}u)x^{C+D}\Bigr)\,dx\notag\\
\end{align}
where the first inequality is due to $\frac{-1}{(x+1)\ln(x+1)}\geq \e$, 
for all $-1<x<0$, since $(x+1)\ln(x+1)$ is convex with minimum value $-\e^{-1}$. Observe that 
\begin{align}\label{proof_prop_lowerbd2}
&\int_{0}^1 \Bigl(\mathrm{e}^{-\tilde{A}u(x-1)}x^{C+D}\Bigr)\,dx\cr
\geq\,\,&\mathrm{e}^{\tilde{A}u}\int_{0}^{\frac{1}{\tilde{A}u}\wedge 1}
\Bigl(\bigl(-\tilde{A}ux+1\bigr)x^{C+D}\Bigr)\,dx\cr
=\,\,&-\frac{\tilde{A}u\e^{\tilde{A}u}}{C+D+2}\biggl(\frac{1}{\tilde{A}u}\wedge 1\biggr)^{C+D+2}
+\frac{\e^{\tilde{A}u}}{C+D+1}\biggl(\frac{1}{\tilde{A}u}\wedge 1\biggr)^{C+D+1}
\end{align}
and 
\begin{gather}\label{proof_prop_lowerbd3}
\int_{0}^1 \Bigl(\tilde{A}ux^{C+D+1}\Bigr)\,dx+\int_{0}^{1} \Bigl(-(1+\tilde{A}u)x^{C+D}\Bigr)\,dx
=\frac{\tilde{A}u}{C+D+2}-\frac{1+\tilde{A}u}{C+D+1}, 
\end{gather}
where we have $C+D>0$ and the inequality is because that $\e^{-\tilde{A}ux}\geq-\tilde{A}ux+1$ 
on interval $\bigl[0,\frac{1}{\tilde{A}u}\wedge 1\bigr]$. Therefore, the required result follows from (\ref{proof_prop_lowerbd1})-(\ref{proof_prop_lowerbd3}) and the 
expression of $\hat{\kappa}_{\tilde{A}}^{VG}$ in (\ref{kappaofVG}) as well as the fact that 
$\mathrm{e}^{-\tilde{A}ux}-1+\tilde{A}ux$ and $\hat{\nu}$ are positive for all $u\geq 0$ and $x\in \ktR$. 
\end{proof}


\bibliographystyle{apalike} 
\bibliography{references}

\end{document}